\let\savebmatrix\bmatrix
\let\saveendbmatrix\endbmatrix
\renewenvironment{bmatrix}{\savebmatrix}{\saveendbmatrix}
\providecommand*{\shuffle}{%
  \mathbin{\mathpalette\shuffle@{}}%
}
\newcommand*{\shuffle@}[2]{%
  \sbox0{$#1\vcenter{}$}%
  \kern .15\ht0 
  \rlap{\vrule height .25\ht0 depth 0pt width 2.5\ht0}%
  \raise.1\ht0\hbox to 2.5\ht0{%
    \vrule height 1.75\ht0 depth -.1\ht0 width .17\ht0 %
    \hfill
    \vrule height 1.75\ht0 depth -.1\ht0 width .17\ht0 %
    \hfill
    \vrule height 1.75\ht0 depth -.1\ht0 width .17\ht0 %
  }%
  \kern .15\ht0 
}
\newtheorem{theorem}{Theorem}
\newtheorem{proposition}[theorem]{Proposition}
\newtheorem{lemma}[theorem]{Lemma}
\theoremstyle{definition}
\newtheorem{defn}{Definition}
\newtheorem{example}{Example}
\newcommand{\fwd}{\mathrm{fw}}
\newcommand{\pe}{\mathbf{e}}
\newcommand{\transpose}{^\mathsf{T}}
\DeclareMathOperator{\diag}{diag}
\newcommand{\sym}[1]{\mathtt{#1}}
\tikzset{state/.style={draw,circle,minimum width=6pt}}
\tikzset{every edge/.style={draw,->,>=stealth',shorten >=1pt,auto,semithick}}
\tikzset{initial text={},double distance=2pt}
\pgfplotsset{trig format plots=rad}
\newlength\figureheight
\newlength\figurewidth
\begin{document}
\twocolumn[
\icmltitle{Representing Unordered Data Using Complex-Weighted Multiset Automata}

\begin{icmlauthorlist}
\icmlauthor{Justin DeBenedetto}{nd}
\icmlauthor{David Chiang}{nd}
\end{icmlauthorlist}
\icmlaffiliation{nd}{Department of Computer Science and Engineering,
University of Notre Dame, Notre Dame, IN 46556, USA}
\icmlcorrespondingauthor{Justin DeBenedetto}{jdebened@nd.edu}
\icmlcorrespondingauthor{David Chiang}{dchiang@nd.edu}
\vskip 0.3in
]

\printAffiliationsAndNotice{}

\begin{abstract}
Unordered, variable-sized inputs arise in many settings across multiple fields. The ability for set- and multiset-oriented neural networks to handle this type of input has been the focus of much work in recent years. We propose to represent multisets using complex-weighted \emph{multiset automata} and show how the multiset representations of certain existing neural architectures can be viewed as special cases of ours. Namely, (1) we provide a new theoretical and intuitive justification for the Transformer model's representation of positions using sinusoidal functions, and (2) we extend the DeepSets model to use complex numbers, enabling it to outperform the existing model on an extension of one of their tasks.  
\end{abstract}

\section{Introduction}

Neural networks which operate on unordered, variable-sized input \citep{vinyals2015order,Wagstaff2019OnTL} have been gaining interest for various tasks, such as processing graph nodes \citep{Murphy2018JanossyPL}, hypergraphs \citep{maron2018invariant}, 3D~image reconstruction \citep{yang2019}, and point cloud classification and image tagging \citep{Zaheer2017DeepS}. Similar networks have been applied to multiple instance learning \citep{Pevn2016UsingNN}.  

One such model, DeepSets \citep{Zaheer2017DeepS}, computes a representation of each element of the set, then combines the representations using a commutative function (e.g., addition) to form a representation of the set that discards ordering information. \citet{Zaheer2017DeepS} provide a proof that any function on sets can be modeled in this way, by encoding sets as base-4 fractions and using the universal function approximation theorem, but their actual proposed model is far simpler than the model constructed by the theorem.

In this paper, we propose to compute representations of multisets using \emph{weighted multiset automata}, a variant of weighted finite-state (string) automata in which the order of the input symbols does not affect the output. This representation can be directly implemented inside a neural network. We show how to train these automata efficiently by approximating them with string automata whose weights form complex, diagonal matrices.

Our representation is a generalization of DeepSets', and it also turns out to be a generalization of the Transformer's position encodings \citep{Vaswani2017AttentionIA}. In Sections~\ref{sec:position} and \ref{sec:deepsets}, we discuss the application of our representation in both cases. 
\begin{itemize}
\item Transformers \citep{Vaswani2017AttentionIA} encode the position of a word as a vector of sinusoidal functions that turns out to be a special case of our representation of multisets. So weighted multiset automata provide a new theoretical and intuitive justification for sinusoidal position encodings. We experiment with several variations on position encodings inspired by this justification, and although they do not yield an improvement, we do find that learned position encodings in our representation do better than learning a different vector for each absolute position.

\item We extend the DeepSets model to use our representation, which amounts to upgrading it from real to complex numbers. On an extension of one of their tasks (adding a sequence of one-digit numbers and predicting the units digit), our model is able to reach perfect performance, whereas the original DeepSets model does no better than chance.
\end{itemize}

\section{Definitions}

We first review weighted string automata, then modify the definition to weighted multiset automata.

\subsection{String automata}

Weighted finite automata are commonly pictured as directed graphs whose edges are labeled by symbols and weights, but we use an alternative representation as collections of matrices: for each symbol $a$, this representation takes all the transitions labeled $a$ and arranges their weights into an adjacency matrix, called $\mu(a)$. This makes it easier to use the notation and techniques of linear algebra, and it also makes clearer how to implement weighted automata inside neural networks.

Let $\mathbb{K}$ be a commutative semiring; in this paper, $\mathbb{K}$ is either $\mathbb{R}$ or $\mathbb{C}$. 

\begin{defn}
A \emph{$\mathbb{K}$-weighted finite automaton} (WFA) over $\Sigma$ is a tuple $M=(Q, \Sigma, \lambda, \mu, \rho)$, where:
\begin{itemize}
\item $Q = \{1, \ldots, d\}$ is a finite set of states,
\item $\Sigma$ is a finite alphabet,
\item $\lambda \in \mathbb{K}^{1 \times d}$ is a row vector of initial weights,
\item $\mu : \Sigma \rightarrow \mathbb{K}^{d \times d}$ assigns a transition matrix to every symbol, and
\item $\rho \in \mathbb{K}^{d \times 1}$ is a column vector of final weights.
\end{itemize}
\end{defn}
The weight $\lambda_q$ is the weight of starting in state $q$; the weight $[\mu(a)]_{qr}$ is the weight of transitioning from state $q$ to state $r$ on input $a$, and the weight $\rho_q$ is the weight of accepting in state $q$. We extend the mapping $\mu$ to strings: If $w = w_1 \cdots w_n \in \Sigma^\ast$, then $\mu(w) = \mu(w_1) \cdots \mu(w_n)$. Then, the weight of $w$ is $\lambda \mu(w) \rho$. In this paper, we are more interested in representing $w$ as a vector rather than a single weight, so define the vector of \emph{forward weights} of $w$ to be $ \fwd_M(w) = \lambda \mu(w)$. (We include the final weights $\rho$ in our examples, but we do not actually use them in any of our experiments.)
 
Note that, different from many definitions of weighted automata, this definition does not allow $\epsilon$-transitions, and there may be more than one initial state. (Hereafter, we use $\epsilon$ to stand for a small real number.)

\begin{example} \label{eg:wfa}
Below, $M_1$ is a WFA that accepts strings where the number of $\sym{a}$'s is a multiple of three; $(\lambda_1, \mu_1, \rho_1)$ is its matrix representation:
\[
\begin{tikzpicture}[baseline=0,x=1.5cm,y=1.5cm]
\node[initial,accepting,state](q1) {$q_1$};
\node[state](q2) at (1,0) {$q_2$};
\node[state](q3) at (2,0) {$q_3$};
\draw (q1) edge node {$\sym{a}$} (q2);
\draw (q2) edge node {$\sym{a}$} (q3);
\draw (q3) edge[bend right=45,auto=right] node {$\sym{a}$} (q1);
\node at (0,0.75) {$M_1$};
\end{tikzpicture}
\qquad
\begin{aligned}
\lambda_1 &= \begin{bmatrix} 1 & 0 & 0 \end{bmatrix} \\
\rho_1 &= \begin{bmatrix} 1 & 0 & 0 \end{bmatrix}\transpose \\
\mu_1(\sym{a}) &= \begin{bmatrix}
0 & 1 & 0 \\
0 & 0 & 1 \\
1 & 0 & 0 \\
\end{bmatrix}.
\end{aligned}
\]
And $M_2$ is a WFA that accepts $\{\sym{b}\}$; $(\lambda_2, \mu_2, \rho_2)$ is its matrix representation:
\[
\begin{tikzpicture}[baseline=-0.75cm,x=1.5cm,y=1.5cm]
\node[initial,state](q1) {$q_1$};
\node[accepting,state](q2) at (0,-1) {$q_2$};
\draw (q1) edge node {$\sym{b}$} (q2);
\node at (-1,0) {$M_2$};
\end{tikzpicture}
\qquad
\begin{aligned}
\lambda_2 &= \begin{bmatrix} 1 & 0 \end{bmatrix} \\
\rho_2 &= \begin{bmatrix} 0 & 1 \end{bmatrix}\transpose \\
\mu_2(\sym{b}) &= \begin{bmatrix}
0 & 1 \\
0 & 0 \\
\end{bmatrix}.
\end{aligned}
\]
\end{example}

\subsection{Multiset automata}

The analogue of finite automata for multisets is the special case of the above definition where multiplication of the transition matrices $\mu(a)$ does not depend on their order.
\begin{defn}
\label{def:multisetautomaton}
A \emph{$\mathbb{K}$-weighted multiset finite automaton} is one whose transition matrices commute pairwise. That is, for all $a, b \in \Sigma$, we have $\mu(a)\mu(b) = \mu(b)\mu(a)$.
\end{defn}

\begin{example} \label{eg:msa}
Automata $M_1$ and $M_2$ are multiset automata because they each have only one transition matrix, which trivially commutes with itself. Below is a multiset automaton that accepts multisets over $\{\sym{a}, \sym{b}\}$ where the number of $\sym{a}$'s is a multiple of three and the number of $\sym{b}$'s is exactly one.
\begin{center}
\begin{tikzpicture}[x=1.5cm,y=1.5cm]
\node at (-1,0) {$M_3$};
\node[initial,state](q1) {$q_1$};
\node[state](q2) at (1,0) {$q_2$};
\node[state](q3) at (2,0) {$q_3$};
\node[accepting,state](q4) at (0,-1) {$q_4$};
\node[state](q5) at (1,-1) {$q_5$};
\node[state](q6) at (2,-1) {$q_6$};
\draw (q1) edge node {$\sym{b}$} (q4);
\draw (q2) edge node {$\sym{b}$} (q5);
\draw (q3) edge node {$\sym{b}$} (q6);
\draw (q1) edge node {$\sym{a}$} (q2);
\draw (q2) edge node {$\sym{a}$} (q3);
\draw (q3) edge[bend right=45,auto=right] node {$\sym{a}$} (q1);
\draw (q4) edge node {$\sym{a}$} (q5);
\draw (q5) edge node {$\sym{a}$} (q6);
\draw (q6) edge[bend left=45] node {$\sym{a}$} (q4);
\end{tikzpicture}
\\
\resizebox{\columnwidth}{!}{$\begin{aligned}
\lambda_3 &= \begin{bmatrix} 1 & 0 & 0 & 0 & 0 & 0 \end{bmatrix} &
\rho_3 &= \begin{bmatrix} 0 & 0 & 0 & 1 & 0 & 0 \end{bmatrix}\transpose \\
\mu_3(\sym{a}) &= \begin{bmatrix}
0 & 1 & 0 & 0 & 0 & 0 \\
0 & 0 & 1 & 0 & 0 & 0 \\
1 & 0 & 0 & 0 & 0 & 0 \\
0 & 0 & 0 & 0 & 1 & 0 \\
0 & 0 & 0 & 0 & 0 & 1 \\
0 & 0 & 0 & 1 & 0 & 0 
\end{bmatrix}
&
\mu_3(\sym{b}) &= \begin{bmatrix}
0 & 0 & 0 & 1 & 0 & 0 \\
0 & 0 & 0 & 0 & 1 & 0 \\
0 & 0 & 0 & 0 & 0 & 1 \\
0 & 0 & 0 & 0 & 0 & 0 \\
0 & 0 & 0 & 0 & 0 & 0 \\
0 & 0 & 0 & 0 & 0 & 0 
\end{bmatrix}.
\end{aligned}$}
\end{center}
\end{example}

Our multiset automata (Definition~\ref{def:multisetautomaton}) describe the \emph{recognizable} distributions over multisets \citep[Section 4.1]{sakarovitch:2009}. A recognizable distribution is intuitively one that we can compute by processing one element at a time using only a finite amount of memory.

This is one of two natural ways to generalize string automata to multiset automata. The other way, which describes all the \emph{rational} distributions of multisets \citep[Section 3.1]{sakarovitch:2009}, places no restriction on transition matrices and computes the weight of a multiset $w$ by summing over all paths labeled by permutations of $w$. But this is NP-hard to compute.

Our proposal, then, is to represent a multiset $w$ by the vector of forward weights, $\fwd_M(w)$, with respect to some weighted multiset automaton $M$. In the context of a neural network, the transition weights $\mu(a)$ can be computed by any function as long as it does not depend on the ordering of symbols, and the forward weights can be used by the network in any way whatsoever.

\section{Training}

Definition~\ref{def:multisetautomaton} does not lend itself well to training, because parameter optimization needs to be done subject to the commutativity constraint. Previous work \citep{debenedetto+chiang:2018} 
suggested approximating training of a multiset automaton by training a string automaton while using a regularizer to encourage the weight matrices to be close to commuting. However, this strategy cannot make them commute exactly, and the regularizer, which has $O(|\Sigma|^2)$ terms, is expensive to compute.

Here, we pursue a different strategy, which is to restrict the transition matrices $\mu(a)$ to be diagonal. This guarantees that they commute. As a bonus, diagonal matrices are computionally less expensive than full matrices. Furthermore, we show that if we allow complex weights, we can learn multiset automata with diagonal matrices which represent multisets almost as well as full matrices. We show this first for the special case of unary automata (\S\ref{sec:unary}) and then general multiset automata (\S\ref{sec:general}).

\subsection{Unary automata}
\label{sec:unary}

Call an automaton \emph{unary} if $|\Sigma|=1$. Then, for brevity, we simply write $\mu$ instead of $\mu(a)$ where $a$ is the only symbol in $\Sigma$. 

Let $\|\cdot\|$ be the Frobenius norm; by equivalence of norms \citep[p.~352]{horn+johnson:2012}, the results below should carry over to any other matrix norm, as long as it is monotone, that is: if $A \leq B$ elementwise, then $\|A\| \leq \|B\|$.

As stated above, our strategy for training a unary automaton is to allow $\mu$ to be complex, but restrict it to be diagonal. The restriction does not lose much generality, for suppose that we observe data generated by a multiset automaton $M = (\lambda, \mu, \rho)$. Then $\mu$ can be approximated by a complex diagonalizable matrix by the following well-known result:
\begin{proposition}[\protect{\citealt[p.~116]{horn+johnson:2012}}] \label{prop:dense}
For any complex square matrix $A$ and $\epsilon > 0$, there is a complex matrix $E$ such that $\|E\| \leq \epsilon$ and $A+E$ is diagonalizable in~$\mathbb{C}$. 
\end{proposition}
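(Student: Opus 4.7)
The plan is to apply the \emph{Schur triangularization theorem}: every complex square matrix $A$ can be written as $A = U T U^{*}$ where $U$ is unitary and $T$ is upper triangular, with the eigenvalues of $A$ appearing on the diagonal of $T$. Since a matrix with pairwise distinct eigenvalues is automatically diagonalizable, and an upper triangular matrix has its diagonal entries as eigenvalues, the whole problem reduces to perturbing $T$ by a small \emph{diagonal} matrix $D$ so that the entries $t_{ii} + d_i$ are pairwise distinct. Then $T+D$ is diagonalizable, and setting $E = U D U^{*}$ makes $A + E = U(T+D)U^{*}$ similar to $T+D$, hence diagonalizable as well.

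For the norm bound I would invoke unitary invariance of the Frobenius norm, $\|E\| = \|U D U^{*}\| = \|D\|$, so it suffices to produce a diagonal $D$ with $\|D\| \le \epsilon$ and the ``distinct entries'' property. The construction is greedy: pick $d_1, d_2, \ldots, d_n$ one at a time from the closed disk of radius $\epsilon/\sqrt{n}$ in $\mathbb{C}$; when choosing $d_k$, avoid the at most $k-1$ values that would make $t_{kk}+d_k$ coincide with some previously chosen $t_{jj}+d_j$. Since a disk in $\mathbb{C}$ has uncountably many points, such a choice always exists, and then $\|D\|^2 = \sum_i |d_i|^2 \le n\cdot(\epsilon/\sqrt{n})^2 = \epsilon^2$.

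There is no genuinely hard step here: Schur decomposition does the heavy lifting, and the rest is an elementary ``dodge finitely many bad points'' argument. The main conceptual choice is to perturb $T$ rather than $A$ directly, since this is what lets the Frobenius bound on $D$ translate back to one on $E$ cleanly via unitary invariance. For the more general monotone matrix norm alluded to in the surrounding text, the same greedy construction works after rescaling $\epsilon$ via equivalence of norms, so no new ideas would be needed there either.
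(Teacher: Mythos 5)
Your proof is correct, and it takes a genuinely different route from the paper's. The paper forms the Jordan decomposition $A = PJP^{-1}$, perturbs the diagonal of $J$, and sets $E = PDP^{-1}$; because $P$ is merely invertible, this costs a factor of the condition number $\kappa(P) = \|P\|\|P^{-1}\|$, so the paper must shrink the diagonal perturbation to $\|D\| \leq \epsilon/\kappa(P)$ in order to guarantee $\|E\| \leq \epsilon$. You instead use Schur triangularization $A = UTU^*$, where $U$ is unitary, and exploit unitary invariance of the Frobenius norm to get $\|E\| = \|D\|$ exactly, with no condition-number factor. That is cleaner and in fact slightly stronger as stated, since it avoids any dependence on how ill-conditioned the diagonalizing basis might be. The greedy ``dodge finitely many bad points in a disk'' argument for choosing $D$ is the same in spirit as the paper's ``choose diagonal entries all different,'' and is correct. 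The one thing the paper's Jordan-based version buys is continuity with the rest of the appendix: Theorem~\ref{prop:bound} also builds on a Jordan decomposition, so reusing $A = PJP^{-1}$ here keeps the machinery uniform, whereas your Schur-based proof would need to be swapped out or supplemented if one later wants explicit control on powers $(A+E)^n$ block by block.
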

\begin{proof}
Form the Jordan decomposition $A = PJP^{-1}$. We can choose a diagonal matrix $D$ such that $\|D\| \leq \frac{\epsilon}{\kappa(P)}$ (where $\kappa(P) = \|P\| \|P^{-1}\|$) and the diagonal entries of $J+D$ are all different. Then $J+D$ is diagonalizable. Let $E = P D P^{-1}$; then $\|E\| \leq \|P\| \|D\| \|P^{-1}\| = \kappa(P) \|D\| \leq \epsilon$, and  $A+E = P(J+D)P^{-1}$ is also diagonalizable.
\end{proof}
So $M$ is close to an automaton $(\lambda, \mu+E, \rho)$ where $\mu+E$ is diagonalizable. Furthermore, by a change of basis, we can make $\mu+E$ diagonal without changing the automaton's behavior:
\begin{proposition}
If $M = (\lambda, \mu, \rho)$ is a multiset automaton and $P$ is an invertible matrix, the multiset automaton $(\lambda', \mu', \rho')$ where
\begin{align*}
\lambda' &= \lambda P^{-1} \\
\mu'(a) &= P \mu(a) P^{-1} \\
\rho' &= P \rho
\end{align*}
computes the same multiset weights as $M$.
\end{proposition}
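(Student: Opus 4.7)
The plan is to verify the claim by a direct telescoping computation: conjugation by $P$ cancels against $P^{-1}$ at every interior position, so after multiplying by $\lambda'$ on the left and $\rho'$ on the right, only an innermost $\mu(w)$ survives, sandwiched between the original $\lambda$ and $\rho$. There is no real obstacle here; the argument is a standard similarity-transform computation, and the one point worth flagging is that I must also verify that the transformed tuple is itself a \emph{multiset} automaton in the sense of Definition~\ref{def:multisetautomaton}, not just a WFA.

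First I would check commutativity of the new transition matrices. For any $a, b \in \Sigma$,
\[
\mu'(a)\mu'(b) = P\mu(a)P^{-1}P\mu(b)P^{-1} = P\mu(a)\mu(b)P^{-1},
\]
and the analogous expression for $\mu'(b)\mu'(a)$ is $P\mu(b)\mu(a)P^{-1}$. Since the original $\mu(a)$ and $\mu(b)$ commute by hypothesis, these two products are equal, so $(\lambda',\mu',\rho')$ satisfies Definition~\ref{def:multisetautomaton}.

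Next I would extend this observation to arbitrary strings by a short induction on length: the base case $\mu'(\epsilon) = I = PIP^{-1}$ is trivial, and for the inductive step, $\mu'(wa) = \mu'(w)\mu'(a) = P\mu(w)P^{-1} \cdot P\mu(a)P^{-1} = P\mu(wa)P^{-1}$. Finally I would compute the weight of any string $w$ under the new automaton,
\[
\lambda'\,\mu'(w)\,\rho' = (\lambda P^{-1})(P\mu(w)P^{-1})(P\rho) = \lambda\,\mu(w)\,\rho,
\]
which is precisely the weight assigned to $w$ by $M$. Because the commutativity of the transition matrices ensures that this weight depends only on the multiset of symbols in $w$, the two automata agree on multiset weights, as required.
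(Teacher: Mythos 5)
Your proof is correct, and it is the standard similarity-transformation argument; the paper states this proposition without proof, evidently treating it as routine, and your telescoping computation together with the check that the $\mu'(a)$ still commute is exactly the right way to fill it in.
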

This means that in training, we can directly learn complex initial weights $\lambda'$ and a complex diagonal transition matrix $\mu'$, and the resulting automaton $M'$ should be able to represent multisets almost as well as a general unary automaton would.

\begin{example}
In Example~\ref{eg:wfa}, $\mu_1(\sym{a})$ is diagonalizable:
\begin{align*}
\lambda_1' &= \begin{bmatrix} \frac13 & \frac13 & \frac13 \end{bmatrix} \\
\rho_1' &= \begin{bmatrix} 1 & 1 & 1 \end{bmatrix}\transpose \\
\mu_1'(\sym{a}) &= \diag \begin{bmatrix}
1 & \exp {\frac{2\pi}3i} & \exp {-\frac{2\pi}3i} 
\end{bmatrix}.
\end{align*}
On the other hand, $\mu_2(\sym{b})$ is only approximately diagonalizable:
\begin{align*}
\lambda_2' &= \begin{bmatrix} \frac1{2\epsilon} & -\frac1{2\epsilon} \end{bmatrix} \\
\rho_2' &= \begin{bmatrix} 1 & 1 \end{bmatrix}\transpose \\
\mu_2'(\sym{b}) &= \diag \begin{bmatrix}
\epsilon & -\epsilon
\end{bmatrix}.
\end{align*}
But $\mu_2'$ can be made arbitrarily close to $\mu_2$ by making $\epsilon$ sufficiently close to zero.
\end{example}

It might be thought that even if $\mu'$ approximates $\mu$ well, perhaps the forward weights, which involve possibly large powers of $\mu$, will not be approximated well. As some additional assurance, we have the following error bound on the powers of $\mu$:
\begin{theorem}
\label{prop:bound}
For any complex square matrix $A$, $\epsilon > 0$, and $0<r<1$, there is a complex matrix $E$ such that $A+E$ is diagonalizable in $\mathbb{C}$ and, for all $n \geq 0$,
\begin{align*}
\|(A+E)^n-A^n\| &\leq r^n \epsilon &&\text{if $A$ nilpotent,} \\
\frac{\|(A+E)^n-A^n\|}{\|A^n\|} &\leq n\epsilon &&\text{otherwise.}
\end{align*}
\end{theorem}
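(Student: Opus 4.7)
The plan is to reduce to Jordan blocks and then exhibit a structured perturbation. Write $A = PJP^{-1}$ with $J = \bigoplus_i J_{k_i}(\lambda_i)$ in Jordan form, and look for $E$ of the form $PFP^{-1}$ with $F$ block-diagonal matching $J$; then $(A+E)^n - A^n = P\bigl((J+F)^n - J^n\bigr)P^{-1}$, so the claim reduces to analogous inequalities for each Jordan block, with the factor $\kappa(P)$ absorbed by shrinking the perturbation slightly. Within a single block $J_k(\lambda) = \lambda I + N$, I would use the corner perturbation $\delta\, e_k e_1^\transpose$. This makes the characteristic polynomial $(x-\lambda)^k - \delta$, so the perturbed eigenvalues $\lambda + \delta^{1/k}\omega^j$ (with $\omega = e^{2\pi i / k}$) are distinct and the block is diagonalizable. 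A direct computation factors the eigenvector matrix as $V = D_1 W$ with $D_1 = \diag(1, \delta^{1/k}, \ldots, \delta^{(k-1)/k})$ and $W$ the (normalized) DFT matrix, yielding $\kappa(V) = O(\delta^{-(k-1)/k})$.

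In the nilpotent case ($\lambda = 0$ for every block), $J_k(0)^n$ vanishes for $n \geq k$, and the diagonalization bound gives $\|(N + \delta\, e_k e_1^\transpose)^n\| \leq \kappa(V)\cdot (\delta^{1/k})^n = O(\delta^{(n-k+1)/k})$. The ratio $\delta^{(n-k+1)/k}/r^n$ is decreasing in $n$ once $\delta < r^k$, so it is maximized at $n = k$, forcing $\delta$ polynomially small in $\epsilon$ (specifically $\delta \lesssim r^{k^2} \epsilon^k$). For $n < k$, the power $(J_k(0)+\delta\, e_k e_1^\transpose)^n$ is a polynomial in $\delta$ with constant term $J_k(0)^n$, so the difference is $O(\delta)$ and is handled by the same choice.

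For the non-nilpotent case I would exploit that $\lambda I$ commutes with $N + \delta C$, giving the binomial expansion $(J_k(\lambda) + \delta C)^n = \sum_{j}\binom{n}{j}\lambda^{n-j}(N+\delta C)^j$, which lets us subtract termwise against $J_k(\lambda)^n = \sum_j \binom{n}{j}\lambda^{n-j} N^j$. Using the nilpotent bounds on $\|(N+\delta C)^j - N^j\|$, the dominant contribution comes from the terms $j = k, k+1, \ldots$, giving $O(\binom{n}{k}|\lambda|^{n-k}\delta^{1/k})$, while $\|J_k(\lambda)^n\| \geq \binom{n}{k-1}|\lambda|^{n-k+1}$ just from its top-right corner; the resulting ratio is $O(n\delta^{1/k}/|\lambda|)$, which is at most $n\epsilon$ once $\delta$ is small enough.

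The principal obstacle is the tension characteristic of perturbing non-diagonalizable matrices: shrinking $\delta$ keeps $F$ small but lets the perturbed eigenvalues coalesce, sending $\kappa(V)$ to infinity; enlarging $\delta$ spreads the eigenvalues but risks pushing the spectral radius above $r$. The corner perturbation $\delta\, e_k e_1^\transpose$ is what makes this balance tractable: it couples the eigenvalue spread $\delta^{1/k}$ and the conditioner $\kappa(V) = O(\delta^{-(k-1)/k})$ through a single scale, so one choice of $\delta$ (polynomially small in $\epsilon$, with degree depending on the largest Jordan block size) simultaneously enforces the decay rate $r^n$ and the diagonalizer's conditioning.
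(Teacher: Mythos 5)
Your reduction to Jordan blocks and the conjugation by $P$ match the paper's structure, and your analysis of the nilpotent blocks via the corner perturbation $\delta\, e_k e_1^\transpose$ is essentially sound. But the paper uses a different perturbation—a diagonal $D$ with small distinct entries chosen to \emph{shrink} each block's diagonal toward zero—and that difference is not cosmetic: your corner perturbation fails in the non-nilpotent case.

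The gap is in the claim that in $(J_k(\lambda)+\delta C)^n - J_k(\lambda)^n = \sum_j \binom{n}{j}\lambda^{n-j}\bigl((N+\delta C)^j - N^j\bigr)$ the dominant contribution is $O\bigl(\binom{n}{k}|\lambda|^{n-k}\delta^{1/k}\bigr)$. That is true only for small $n$; for large $n$ the whole sum over $j$ matters and grows exponentially. To see it concretely, note $(N+\delta C)^{mk}e_1 = \delta^m e_1$, so the $(1,1)$ entry of the difference is
\[
[(J+\delta C)^n - J^n]_{1,1} \;=\; \sum_{m\geq 1}\binom{n}{mk}\lambda^{n-mk}\,\delta^m .
\]
For $k=2$, $|\lambda|=1$ this equals $\tfrac{1}{2}\bigl((1+\sqrt{\delta})^n+(1-\sqrt{\delta})^n\bigr)-1$, which grows like $(1+\sqrt{\delta})^n$, exponentially in $n$ for any fixed $\delta>0$, while $\|J^n\|$ grows only like $n^{k-1}$. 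So $\|(J+\delta C)^n - J^n\|/\|J^n\|$ is unbounded, not $\leq n\epsilon$. The structural reason is that $\delta\, e_k e_1^\transpose$ closes the Jordan chain into a cycle, and the perturbed eigenvalues $\lambda+\delta^{1/k}\omega^j$ include some of modulus strictly larger than $|\lambda|$; once the spectral radius increases, the ratio $\|(J+\delta C)^n\|/\|J^n\|$ must diverge, and no choice of $\delta>0$ can repair this. The paper's diagonal shrink avoids the problem entirely: with $J+D$ having diagonal entries $\lambda(1-\epsilon_i)$, each entry satisfies $[(J+D)^n]_{ij} = c_{ij}[J^n]_{ij}$ with $(1-\epsilon)^n \leq c_{ij}\leq 1$, so the error is entrywise at most $n\epsilon$ times $|[J^n]_{ij}|$, and monotonicity of the Frobenius norm gives the relative bound directly. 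In short: for the nilpotent case either perturbation works, but for the non-nilpotent case you need a perturbation that does not increase any eigenvalue modulus, and the corner perturbation does not have that property.
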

In other words, if $A$ is not nilpotent, the relative error in the powers of $A$ increase only linearly in the exponent; if $A$ is nilpotent, we can't measure relative error because it eventually becomes undefined, but the absolute error decreases exponentially.

\begin{proof}
See Appendix~\ref{sec:bound}.
\end{proof}

\subsection{General case}
\label{sec:general}

In this section, we allow $\Sigma$ to be of any size. Proposition~\ref{prop:dense} unfortunately does not hold in general for multiple matrices \citep{o2006approximately}. That is, it may not be possible to perturb a set of commuting matrices so that they are \emph{simultaneously} diagonalizable.  

\begin{defn}
Matrices $A_1,\ldots,A_m$ are \emph{simultaneously diagonalizable} if there exists an invertible matrix $P$ such that $PA_iP^{-1}$ is diagonal for all $i \in \{1,\cdots, n\}$.

We say that $A_1,\cdots,A_m$ are \emph{approximately simultaneously diagonalizable (ASD)} if, for any $\epsilon>0$, there are matrices $E_1, \ldots, E_m$ such that $\|E_i\| \leq \epsilon$ and $A_1+E_1, \ldots, A_m+E_m$ are simultaneously diagonalizable.
\end{defn}

\citet{o2006approximately} give examples of sets of matrices that are commuting but not ASD. However, if we are willing to add new states to the automaton (that is, to increase the dimensionality of the transition matrices), we can make them ASD.

\begin{theorem}
\label{prop:genASD}
For any weighted multiset automaton, there is an equivalent complex-weighted multiset automaton, possibly with more states, whose transition matrices are ASD.
\end{theorem}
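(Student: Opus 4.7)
The strategy is to replace $M$ by a direct sum of simple tensor-product sub-automata whose transition matrices are manifestly ASD. First, because the $\mu(a)$ commute pairwise, over $\mathbb{C}$ they admit a joint generalized eigenspace decomposition: there is a change of basis after which $\mathbb{C}^d = \bigoplus_\alpha V_\alpha$, indexed by tuples $\alpha = (\alpha(a))_{a \in \Sigma}$, and on each $V_\alpha$ we have $\mu(a) = \alpha(a)I + N_a$ with the $N_a$ pairwise commuting nilpotents on that block. By the earlier change-of-basis proposition, such a conjugation produces an equivalent multiset automaton, so I may assume this block-diagonal structure from the outset.

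Next, on each $V_\alpha$ I expand using the multinomial theorem (legitimate because the $N_a$ commute):
\[
\prod_a (\alpha(a)I + N_a)^{n_a} = \sum_{(k_a)} \left( \prod_a \binom{n_a}{k_a} \alpha(a)^{n_a - k_a} \right) \prod_a N_a^{k_a}.
\]
The sum is finite because each $N_a$ is nilpotent, so the contribution of $V_\alpha$ to the weight of a multiset with counts $(n_a)$ is a finite sum of terms $c^\alpha_{(k_a)} \prod_a \binom{n_a}{k_a} \alpha(a)^{n_a - k_a}$, with scalar coefficients $c^\alpha_{(k_a)} = \lambda_\alpha \bigl( \prod_a N_a^{k_a} \bigr) \rho_\alpha$. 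Now $\binom{n}{k} \beta^{n-k}$ is precisely the $(1,k+1)$-entry of the $n$-th power of the $(k+1) \times (k+1)$ Jordan block $J_{\beta,k+1}$. For each nonzero term indexed by $(\alpha,(k_a))$ I therefore build a sub-automaton on $\bigotimes_a \mathbb{C}^{k_a + 1}$ whose transition matrix for symbol $a$ is
\[
I \otimes \cdots \otimes J_{\alpha(a), k_a + 1} \otimes \cdots \otimes I,
\]
the Jordan block sitting in the $a$-th tensor slot and identities elsewhere. These matrices commute trivially because they act on disjoint slots, and with initial weight $\bigotimes_a e_1$ and final weight $c^\alpha_{(k_a)} \cdot \bigotimes_a e_{k_a + 1}$, the sub-automaton computes exactly $c^\alpha_{(k_a)} \prod_a \binom{n_a}{k_a} \alpha(a)^{n_a - k_a}$. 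The equivalent automaton $M'$ is then the direct sum of all of these sub-automata over all $\alpha$ and all nonzero $(k_a)$.

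It remains to check that the transition matrices of $M'$ are ASD. Each Jordan block is a single matrix, hence by Proposition~\ref{prop:dense} it can be perturbed by an arbitrarily small complex matrix to a diagonalizable one. Applying such perturbations independently in each tensor slot makes every transition matrix of a given sub-automaton simultaneously diagonal in the common tensor-product basis, with total perturbation norm bounded by any desired $\epsilon$. Direct-summing ASD families is again ASD (perturb each block-diagonal summand within its own block), so the full $M'$ is ASD. The main conceptual obstacle is that one cannot in general perturb commuting matrices in place to make them simultaneously diagonalizable, as \citet{o2006approximately} show; the enlargement via tensor products sidesteps this by routing each symbol's action into an independent slot, collapsing the simultaneous-diagonalization problem to a collection of single-matrix instances solved by Proposition~\ref{prop:dense}. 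The remaining work is bookkeeping the expansion and handling $\alpha(a) = 0$, which is harmless since $J_{0,k+1}$ is still a single matrix and hence ASD.
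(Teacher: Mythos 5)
Your proof is correct, and it reaches the same endpoint as the paper --- an automaton expressed as a direct sum of shuffle/tensor products of unary automata, which is then ASD --- but via a genuinely different middle step. Like the paper, you begin with the joint generalized-eigenspace decomposition (this is precisely the content of Lemma~\ref{lem:asdblocks}), and like the paper you close by invoking Proposition~\ref{prop:dense} on each single matrix and the stability of ASD under direct sums and shuffle products (Proposition~\ref{prop:regops}). The divergence is in the paper's Proposition~\ref{prop:makeasd}: there, each single-eigenvalue block is simultaneously upper-triangularized and the multiset weight is decomposed as a sum over monotone state sequences $q_0 \le \cdots \le q_m$, yielding a union of shuffle products of the unary sub-automata $M_{q,a,r}$. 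You instead write the block as $\mu(a)=\alpha(a)I+N_a$ with commuting nilpotents, expand $\prod_a(\alpha(a)I+N_a)^{n_a}$ by the multinomial theorem, and realize each scalar monomial $\binom{n_a}{k_a}\alpha(a)^{n_a-k_a}$ as the $(1,k_a{+}1)$ entry of a power of a Jordan block, placed in its own tensor slot. Your route isolates the scalar structure of the weight function more directly and needs no auxiliary simultaneous triangularization inside the block; the paper's route has the advantage of producing the closed-form state count $\binom{2m+d}{d-1}$ via a clean generating-function computation, which would be harder to extract from your expansion since the number of nonzero tuples $(\alpha,(k_a))$ depends on the nilpotency pattern of the $N_a$. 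Both arguments are sound.
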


We extend Proposition~\ref{prop:dense} from unary automata to non-unary multiset automata that have a special form; then, we show that any multiset automaton can be converted to one in this special form, but possibly with more states.

Let $\oplus$ stand for direct sum of vectors or matrices, $\otimes$ for the Kronecker product, and define the shuffle product (also known as the Kronecker sum) $A \shuffle B = A \otimes I + I \otimes B$. These operations extend naturally to weighted multiset automata \citep{debenedetto+chiang:2018}: If $M_A = (\lambda_A, \mu_A, \rho_A)$ and $M_B = (\lambda_B, \mu_B, \rho_B)$, then define
\begin{align*}
M_A \oplus M_B &= (\lambda_{\oplus}, \mu_{\oplus}, \rho_{\oplus}) &
\hspace{-1em} M_A \shuffle M_B &= (\lambda_{\shuffle}, \mu_{\shuffle}, \rho_{\shuffle}) \\
\lambda_{\oplus} &= \lambda_A \oplus \lambda_B &
\lambda_{\shuffle} &= \lambda_A \otimes \lambda_B \\
\mu_{\oplus}(a) &= \mu_A(a) \oplus \mu_B(a) &
\mu_{\shuffle}(a) &= \mu_A(a) \shuffle \mu_B(a) \\
\rho_{\oplus} &= \rho_A \oplus \rho_B &
\rho_{\shuffle} &= \rho_A \otimes \rho_B.
\end{align*}
$M_A \oplus M_B$ recognizes the union of the multisets recognized by $M_A$ and $M_B$; if $M_A$ and $M_B$ use disjoint alphabets, then $M_A \shuffle M_B$ recognizes the concatenation of the multisets recognized by $M_A$ and $M_B$.

They are of interest here because they preserve the ASD property, so non-unary automata formed by applying direct sum and shuffle product to unary automata are guaranteed to have ASD transition matrices.

\begin{proposition}
\label{prop:regops}
If $M_1$ and $M_2$ are multiset automata with ASD transition matrices, then $M_1 \oplus M_2$ has ASD transition matrices, and $M_1 \shuffle M_2$ has ASD transition matrices.
\end{proposition}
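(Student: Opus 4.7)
The plan is, given $\epsilon > 0$, to apply the ASD hypothesis to $M_1$ and $M_2$ with a suitably smaller parameter $\epsilon' > 0$ (depending on $\epsilon$ and the state-set sizes $n_1, n_2$), obtaining perturbations $\{E_i(a)\}_{a \in \Sigma}$ with $\|E_i(a)\| \le \epsilon'$ together with a single invertible matrix $P_i$ that simultaneously diagonalizes every $\mu_i(a) + E_i(a)$. From these ingredients I would assemble perturbations and a single change of basis for the compound automaton in each case.

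For $M_1 \oplus M_2$, the natural candidates are the perturbation $E_{\oplus}(a) = E_1(a) \oplus E_2(a)$ and the change of basis $P_1 \oplus P_2$. A short block calculation confirms that $(P_1 \oplus P_2)(\mu_{\oplus}(a) + E_{\oplus}(a))(P_1 \oplus P_2)^{-1}$ is block-diagonal with two diagonal blocks, hence diagonal. Since $\|E_{\oplus}(a)\| \le \sqrt{2}\,\epsilon'$ in Frobenius norm, choosing $\epsilon' = \epsilon / \sqrt{2}$ yields the required bound.

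For $M_1 \shuffle M_2$, the natural candidates are $E_{\shuffle}(a) = E_1(a) \otimes I + I \otimes E_2(a)$ and $P_1 \otimes P_2$. Using the identity $(P \otimes Q)(A \otimes I + I \otimes B)(P^{-1} \otimes Q^{-1}) = (P A P^{-1}) \otimes I + I \otimes (Q B Q^{-1})$, which follows from the mixed-product property of $\otimes$, this conjugation turns $\mu_{\shuffle}(a) + E_{\shuffle}(a)$ into a Kronecker sum of diagonal matrices, which is itself diagonal. The norm inflation $\|E_{\shuffle}(a)\| \le \sqrt{n_2}\,\|E_1(a)\| + \sqrt{n_1}\,\|E_2(a)\|$ follows from $\|A \otimes B\| = \|A\|\|B\|$ and the triangle inequality, so $\epsilon' = \epsilon / (\sqrt{n_1} + \sqrt{n_2})$ suffices.

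I do not anticipate any serious obstacle: the construction is essentially forced by the definitions of $\oplus$ and $\shuffle$, and the argument reduces to (i) checking that direct sums and Kronecker products interact with conjugation and with diagonality as expected, and (ii) tracking the constant by which norms inflate when passing from $E_i(a)$ to $E_{\oplus}(a)$ or $E_{\shuffle}(a)$. The one point that needs a moment's care is that the ASD definition demands a \emph{single} change of basis valid for all symbols simultaneously, not one per symbol; this is automatic here because $P_1 \oplus P_2$ and $P_1 \otimes P_2$ do not depend on $a$.
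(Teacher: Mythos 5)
Your proof is correct and takes essentially the same route as the paper: apply ASD to each factor with a shrunken tolerance, combine the perturbations as $E_1 \oplus E_2$ and $E_1 \shuffle E_2$, diagonalize with $P_1 \oplus P_2$ and $P_1 \otimes P_2$ respectively, and track the norm inflation. Your Frobenius-norm constants ($\sqrt{2}$, $\sqrt{n_1}+\sqrt{n_2}$) are in fact slightly tighter than the paper's ($2$, $d_1 + d_2$), but this makes no difference to the result.
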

\begin{proof}
See Appendix~\ref{sec:regops}.
\end{proof}

\begin{example}
Consider $M_1$ and $M_2$ from Example~\ref{eg:wfa}. If we assume that $\mu_1(\sym{b})$ and $\mu_2(\sym{a})$ are zero matrices, then the shuffle product of $M_2$ and $M_1$ is exactly $M_3$ from Example~\ref{eg:msa}. So the transition matrices of $M_3$ are ASD:
\begin{align*}
\lambda_3' &= \begin{bmatrix}
\frac1{6\epsilon} & \frac1{6\epsilon} & \frac1{6\epsilon} & -\frac1{6\epsilon} & -\frac1{6\epsilon} &-\frac1{6\epsilon} 
\end{bmatrix} \\
\mu_3'(\sym{a}) &= \diag
\begin{bmatrix}
1 & e^{\frac{2\pi}{3}i} & e^{-\frac{2\pi}{3}i} & 1 & e^{\frac{2\pi}{3}i} & e^{-\frac{2\pi}{3}i}
\end{bmatrix} \\
\mu_3'(\sym{b}) &= \diag
\begin{bmatrix}
\epsilon & \epsilon & \epsilon & -\epsilon & -\epsilon & -\epsilon
\end{bmatrix} \\
\rho_3' &= \begin{bmatrix}
1 & 1 & 1 & 1 & 1 & 1
\end{bmatrix}\transpose.
\end{align*}
\end{example}

The next two results give ways of expressing multiset automata as direct sums and/or shuffle products of smaller automata. Lemma~\ref{lem:asdblocks} expresses a multiset automaton as a direct sum of smaller automata, without increasing the number of states, but does not guarantee that the smaller automata are ASD. Proposition~\ref{prop:makeasd} expresses a multiset automaton as a direct sum of shuffle products of unary automata, but can increase the number of states.

\begin{lemma}[\citealt{o2006approximately}] \label{lem:asdblocks}
Suppose $A_1,\ldots,A_k$ are commuting $n\times n$ matrices over an algebraically closed field.  Then there exists an invertible matrix $C$ such that $C^{-1}A_1C,\ldots,C^{-1}A_kC$ are block diagonal matrices with matching block structures and each diagonal block has only a single eigenvalue (ignoring multiplicities).  That is, there is a partition $n=n_1+\cdots + n_r$ of $n$ such that 
\begin{align}
    C^{-1}A_iC = B_i = \begin{bmatrix}
    B_{i1}\\
    &B_{i2}\\
    &&\ddots\\
    &&&B_{ir}
    \end{bmatrix}.
    \label{eq:asdblocks}
\end{align}
where each $B_{ij}$ is an $n_j \times n_j$ matrix having only a single eigenvalue for $i=1,\ldots,k$ and $j=1,\ldots,r$.  
\end{lemma}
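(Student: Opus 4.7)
The plan is to prove the lemma by iteratively applying the primary (generalized-eigenspace) decomposition one matrix at a time, exploiting the standard fact that a matrix commuting with $A$ preserves each generalized eigenspace of $A$.

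First, since the field is algebraically closed, $A_1$ admits a primary decomposition $V = \bigoplus_\lambda V_\lambda^{(1)}$ where $V_\lambda^{(1)} = \ker(A_1 - \lambda I)^n$ and $\lambda$ ranges over the distinct eigenvalues of $A_1$. Because $A_j$ commutes with $A_1$ for $j \geq 2$, we have $(A_1 - \lambda I)^n A_j = A_j (A_1 - \lambda I)^n$, so $A_j$ leaves each $V_\lambda^{(1)}$ invariant. Next, on each summand $V_\lambda^{(1)}$, restrict $A_2, \ldots, A_k$: their restrictions still commute pairwise, so I may decompose $V_\lambda^{(1)}$ further into generalized eigenspaces of $A_2|_{V_\lambda^{(1)}}$. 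Crucially, on each of these finer subspaces $A_1$ still has only the single eigenvalue $\lambda$, because they sit inside $V_\lambda^{(1)}$.

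Iterating this refinement through $A_3, \ldots, A_k$ produces a decomposition $V = W_1 \oplus \cdots \oplus W_r$ indexed by tuples $\vec{\lambda} = (\lambda_1, \ldots, \lambda_k)$ of eigenvalues, where each $W_j$ is invariant under every $A_i$ and $A_i|_{W_j}$ has a single eigenvalue (namely the $i$th entry of the tuple for that $j$). Now choose any basis of each $W_j$ and concatenate them into a basis of $V$; let $C$ be the change-of-basis matrix. Then in the new basis every $A_i$ has block-diagonal form \eqref{eq:asdblocks} with the \emph{same} block sizes $n_j = \dim W_j$ for all $i$, and each diagonal block $B_{ij}$ has only one eigenvalue, as required.

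There is no real obstacle here beyond bookkeeping; the only points that need care are (i) noting that commutativity is inherited by restrictions to common invariant subspaces, so successive refinements remain legitimate, and (ii) observing that the block sizes are forced to match across $i = 1,\ldots,k$ because the decomposition of $V$ is constructed once and used for all the $A_i$ simultaneously. Both facts are immediate from the iterative construction, so the proof reduces to invoking the primary decomposition $k$ times and recording a basis.
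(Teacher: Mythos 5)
The paper does not give its own proof of this lemma; it cites it directly from \citet{o2006approximately}, so there is no in-paper argument to compare against. Your proof via iterated primary (generalized-eigenspace) decomposition is correct and is the standard argument for this result: the key facts you invoke---that $(A_1-\lambda I)^n$ commutes with each $A_j$ and hence $A_j$ preserves $\ker(A_1-\lambda I)^n$, and that commutativity and invariance are inherited under restriction to common invariant subspaces so the refinement can be iterated---are exactly what is needed, and the resulting direct-sum decomposition indexed by eigenvalue tuples (discarding the zero summands) yields the common block structure with one eigenvalue per block per matrix. The only point worth spelling out a bit more explicitly is that at each stage of the refinement \emph{all} of $A_1,\ldots,A_k$ (not merely the ones yet to be processed) preserve the new finer subspaces; this follows from commutativity with the operator currently being decomposed, and you do gesture at it with remark (i), but a reader may want it stated outright.
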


\begin{proposition} \label{prop:makeasd}
If $M$ is a weighted multiset automaton with~$d$ states over an alphabet with $m$ symbols, there exists an equivalent complex-weighted multiset automaton with $\binom{2m+d}{d-1}$ states whose transition matrices are ASD.
\end{proposition}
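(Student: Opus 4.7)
The plan is to combine Lemma~\ref{lem:asdblocks} with a shuffle-product realization of each resulting block on a polynomial-ring quotient. First, apply Lemma~\ref{lem:asdblocks} to the transition matrices $\mu(a_1),\ldots,\mu(a_m)$ and push the resulting change of basis through $\lambda$ and $\rho$ using the similarity-invariance of multiset automata proved earlier in the section. After this, the automaton is equivalent to a direct sum of $r$ sub-automata, one per block, in which each transition matrix has only a single eigenvalue. By Proposition~\ref{prop:regops} for $\oplus$, it then suffices to replace each block-automaton by an ASD one, provided the total state count comes out to at most $\binom{2m+d}{d-1}$.

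Within a single block, write the transition matrices as $B_i = \lambda_i I + N_i$ with $N_1,\ldots,N_m$ pairwise commuting nilpotents of nilpotency index at most the block size. The scalar parts $\lambda_i I$ are realized by a shuffle product of $m$ one-dimensional unary automata, which is trivially ASD. For the nilpotent parts, the key is that a commuting family of nilpotents of bounded index factors through the quotient ring $\mathbb{C}[x_1,\ldots,x_m]/(x_1,\ldots,x_m)^n$ via $x_i \mapsto N_i$, and multiplication by $x_i$ on such a polynomial quotient can be expressed as the shuffle product of $m$ unary nilpotent shift automata. By the unary case (Proposition~\ref{prop:dense}) each shift is ASD, and by Proposition~\ref{prop:regops} the shuffle remains ASD. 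Shuffling the scalar and nilpotent factors together yields an ASD realization of the block; taking the direct sum over blocks yields an ASD realization of the whole automaton.

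The remaining and most technical step is the combinatorial bookkeeping that pins the total state count to $\binom{2m+d}{d-1}$, the dimension of the space of polynomials of total degree at most $d-1$ in $2m+1$ variables. A naive per-block application of the above construction is wasteful (roughly $\sum_j n_j^{m}$ states), so one must instead realize \emph{all} blocks inside a single universal polynomial space of this size, exhibiting the $2m$ required multiplication operators (one for each symbol together with its scalar-shift companion) as multiplication by $2m$ coordinate variables. I expect this to be the main obstacle: identifying the right universal space, checking that every block embeds into it compatibly with its initial and final weights, and verifying that the resulting operators really do arise as shuffle products of unary ASD automata so that Proposition~\ref{prop:regops} can be applied end-to-end.
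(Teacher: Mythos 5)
You take a genuinely different route from the paper, and the route has a real gap in exactly the place you flag. The paper's proof of Proposition~\ref{prop:makeasd} does not invoke Lemma~\ref{lem:asdblocks} at all: it simultaneously \emph{upper-triangularizes} the transition matrices (using only commutativity), notes that the weight of a multiset equals the weight of the corresponding sorted string $a_1^{c_1}\cdots a_m^{c_m}$, and decomposes that weight as a sum over monotone state sequences $q_0\le q_1\le\cdots\le q_m$. Each summand is realized as a shuffle product $M_{q_0,a_1,q_1}\shuffle\cdots\shuffle M_{q_{m-1},a_m,q_m}$ of \emph{unary} ``interval'' automata with $q_i-q_{i-1}+1$ states; these are ASD by Proposition~\ref{prop:dense}, the shuffle and direct sum preserve ASD by Proposition~\ref{prop:regops}, and a generating-function evaluation of $\sum_{q_0\le\cdots\le q_m}\prod_i(q_i-q_{i-1}+1)$ gives exactly $\binom{2m+d}{d-1}$. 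Lemma~\ref{lem:asdblocks} is only used one level up, in the proof of Theorem~\ref{prop:genASD}.

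Your per-block polynomial-ring construction has two problems. The smaller one: a shuffle product of $m$ unary nilpotent shifts realizes multiplication on $\mathbb{C}[x_1,\ldots,x_m]/(x_1^n,\ldots,x_m^n)$, not on $\mathbb{C}[x_1,\ldots,x_m]/(x_1,\ldots,x_m)^n$; the latter is a further quotient, and a quotient of an ASD family need not be ASD, so you actually land in a space of dimension $n_j^m$ per block. (You also need to say how to pass from the given module action of the $N_i$ on the block to this regular representation --- the standard fix is to take $\rho'=1\in R$ and define $\lambda'(p)=\lambda\,\phi(p)\,\rho$ where $\phi$ is the evaluation $x_i\mapsto N_i$, but you don't mention it.) The larger problem is the state count you anticipated: $\sum_j n_j^m$ is simply not bounded by $\binom{2m+d}{d-1}$. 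With $m=5$ and a single block of size $d=2$, the corrected per-block cost is $2^5=32$, but $\binom{12}{1}=12$. And the ``universal polynomial space'' idea does not obviously repair this: after Lemma~\ref{lem:asdblocks}, different blocks carry \emph{different} eigenvalues for the same symbol, so the scalar-shift operators you would need to realize as multiplication by fixed coordinate variables vary from block to block, and there is no single space of polynomials of degree $\le d-1$ in $2m+1$ variables in which they all live compatibly. The paper's sorted-path decomposition sidesteps this entirely by never separating scalar from nilpotent parts, instead controlling the state count through the telescoping interval lengths $q_i-q_{i-1}+1$.
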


\begin{proof}
See Appendix~\ref{sec:makeasd}.
\end{proof}

\begin{proof}[Proof of Theorem~\ref{prop:genASD}]
By Lemma~\ref{lem:asdblocks}, we can put the transition matrices into the form~(\ref{eq:asdblocks}). By Proposition~\ref{prop:makeasd}, for each $j$, we can convert $B_{1j}, \ldots B_{kj}$ into ASD matrices $B_{1j}', \ldots, B_{kj}'$, and by Proposition~\ref{prop:regops}, their direct sum $B_{1j}' \oplus \cdots \oplus B_{kj}'$ is also ASD.
\end{proof}
This means that if we want to learn representations of multisets over a finite alphabet $\Sigma$, it suffices to constrain the transition matrices to be complex diagonal, possibly with more states. Unfortunately, the above construction increases the number of states by a lot. But this does not in any way prevent the use of our representation; we can choose however many states we want, and it's an empirical question whether the number of states is enough to learn good representations.

The following two sections look at two practical applications of our representation.

\section{Position Encodings}
\label{sec:position}

\begin{table*}
\caption{Machine translation experiments with various position encodings. Scores are in case-insensitive BLEU, a common machine translation metric.  The best score in each column is printed in boldface.}
\newcommand{\sig}{\rlap{$^\dag$}}
\begin{center}
\begin{tabular}{@{}lllllllll|l@{}}
\toprule
& & \multicolumn{8}{c}{case-insensitive BLEU} \\
Model & Training & En-Vi$^\ast$ & Uz-En & Ha-En & Hu-En & Ur-En & Ta-En & Tu-En & combined \\
\midrule
diagonal polar & fixed & 32.6 & 25.7 & 24.4 & \textbf{34.2} & 11.5 & 13.4 & 25.7 & 26.4 \\
& learned angles & \textbf{32.7} & 25.8 & 25.4\sig & 34.0 & 11.1\sig & 14.1\sig & 25.7 & \textbf{26.6} \\
\midrule
full matrix & random & 32.6 & \textbf{25.9} & \textbf{25.6}\sig & 34.1 & 11.1\sig & 12.6\sig & \textbf{26.1} & 26.5 \\
& learned & 32.5 & 24.5\sig & 23.6 & 33.5 & 11.4 & \textbf{14.5}\sig & 23.8\sig & 26.5 \\
\midrule
per position & random & 32.6 & 24.3\sig & 24.6 & 33.6\sig & 11.1 & 14.0\sig & 25.7 & 26.3 \\
& learned & 32.0\sig & 22.6\sig & 21.2\sig & 33.0\sig & \textbf{11.7} & 14.4\sig & 21.1\sig & 25.0\sig \\
\bottomrule
\end{tabular}
\end{center}
$^\ast$tokenized references \\
$^\dag$significantly different from first line ($p<0.05$, bootstrap resampling)
\label{tab:position}
\end{table*}

One of the distinguishing features of the Transformer network for machine translation \citep{Vaswani2017AttentionIA}, compared with older RNN-based models, 
is its curious-looking \emph{position encodings},
\begin{equation}
\begin{aligned}
\pe^p_{2j-1} &= \sin 10000^{-2(j-1)/d} (p-1) \\
\pe^p_{2j} &= \cos 10000^{-2(j-1)/d} (p-1) 
\end{aligned}
\label{eq:position}
\end{equation}
which map word positions $p$ (ranging from 1 to $n$, the sentence length) to points in the plane and are the model's sole source of information about word order.

In this section, we show how these position encodings can be interpreted as the forward weights of a weighted unary automaton. We also report on some experiments on some extensions of position encodings inspired by this interpretation.

\subsection{As a weighted unary automaton}

Consider a diagonal unary automaton $M$ in the following form:
{\setlength\arraycolsep{3pt}
\begin{align*}
\lambda &= \begin{bmatrix} s_1 \exp i\phi_1 & s_1 \exp -i\phi_1 & s_2 \exp i\phi_2 & s_2 \exp -i\phi_2 & \cdots \end{bmatrix} \\
\mu &= \begin{bmatrix}
r_1 \exp i\theta_1 & 0 & 0 & 0 & \cdots \\
0 & r_1 \exp -i\theta_1 & 0 & 0 & \cdots \\
0 & 0 & r_2 \exp i\theta_2 & 0 & \dots \\
0 & 0 & 0 & r_2 \exp -i\theta_2 & \cdots \\
\vdots & \vdots & \vdots & \vdots & \ddots \\
\end{bmatrix}.
\end{align*}}%
In order for a complex-weighted automaton to be equivalent to some real-weighted automaton, the entries must come in conjugate pairs like this, so this form is fully general.

By a change of basis, this becomes the following unary automaton $M'$ (this is sometimes called the real Jordan form):
{\setlength\arraycolsep{3pt}
\begin{equation}
\begin{aligned}
\lambda' &= \begin{bmatrix} s_1 \cos \phi_1 & s_1 \sin \phi_1 & s_2 \cos \phi_2 & s_2 \sin \phi_2 & \cdots \end{bmatrix} \\
\mu' &= \begin{bmatrix}
r_1 \cos \theta_1 & r_1 \sin \theta_1 & 0 & 0 & \cdots \\
-r_1 \sin \theta_1 & r_1 \cos \theta_1 & 0 & 0 & \cdots \\
0 & 0 & r_2 \cos \theta_2 & r_2 \sin \theta_2 & \dots \\
0 & 0 & -r_2 \sin \theta_2 & r_2 \cos \theta_2 & \cdots \\
\vdots & \vdots & \vdots & \vdots & \ddots \\
\end{bmatrix}
\end{aligned}
\label{eq:diagpolar}
\end{equation}}%
so that for any string prefix $u$ (making use of the angle sum identities):
\begin{align*}
\fwd_{M'}(u)^\top &= \begin{bmatrix} 
s_1 \, r_1^{|u|} \, \cos (\phi_1 + |u|\theta_1) \\
s_1 \, r_1^{|u|} \, \sin (\phi_1 + |u|\theta_1) \\ 
s_2 \, r_2^{|u|} \, \cos (\phi_2 + |u|\theta_2) \\ 
s_2 \, r_2^{|u|} \, \sin (\phi_2 + |u|\theta_2) \\ 
\vdots
\end{bmatrix}.
\end{align*}
If we let
\begin{align*}
s_i &= 1 &
\phi_i &= \frac{\pi}{2} \\
r_i &= 1 &
\theta_j &= -10000^{-2(j-1)/d}
\end{align*}
this becomes exactly equal to the position encodings defined in (\ref{eq:position}). Thus, the Transformer's position encodings can be reinterpreted as follows: it runs automaton $M'$ over the input string and uses the forward weights of $M'$ just before position $p$ to represent $p$. This encoding, together with the embedding of word $w_p$, is used as the input to the first self-attention layer.

\subsection{Experiments}

This reinterpretation suggests that we might be able to learn position encodings instead of fixing them heuristically. \citet{Vaswani2017AttentionIA}, following \citet{gehring+:2017} and followed by \citet{devlin-etal-2019-bert}, learn a different encoding for each position, but multiset automata provide parameterizations that use many fewer parameters and hopefully generalize better.

We carried out some experiments to test this hypothesis, using an open-source implementation of the Transformer, Witwicky.\footnote{\url{https://github.com/tnq177/witwicky}} The settings used were the default settings, except that we used 8k joint BPE operations and $d=512$ embedding dimensions. We tested the following variations on position encodings.
\begin{itemize}
\item Diagonal polar: multiset automaton as in eq.~(\ref{eq:diagpolar})
\begin{itemize}
\item fixed: The original sinusoidal encodings \citep{Vaswani2017AttentionIA}.
\item learned angles: Initialize the $\phi_i$ and $\theta_i$ to the original values, then optimize them.
\end{itemize}
\item Full matrix: multiset automaton with real weights
\begin{itemize}
\item random: Randomize initial weights so that their expected norm is the same as the original, and transition matrix using orthogonal initialization \citep{saxe+:2013}, and do not optimize them.
\item learned: Initialize $\lambda$ and $\mu$ as above, and then optimize them.
\end{itemize}
\item Per position: a real vector for each position
\begin{itemize}
\item random: Choose a random vector with fixed norm for each absolute position, and do not optimize them.
\item learned: Initialize per-position encodings as above, then optimize them \citep{gehring+:2017}.
\end{itemize}
\end{itemize}

Table~\ref{tab:position} compares these methods on seven low-resource language pairs (with numbers of training tokens ranging from 100k to 2.3M), with the final column computed by concatenating all seven test sets together. Although learning position encodings using multiset automata (``diagonal polar, learned angles'' and ``full matrix, learned'') does not do better than the original sinusoidal encodings (the 0.2 BLEU improvement is not statistically significant), they clearly do better than learning per-position encodings, supporting our view that multiset automata are the appropriate way to generalize sinusoidal encodings.

\section{Complex DeepSets}
\label{sec:deepsets}

In this section, we incorporate a weighted multiset automaton into the DeepSets \citep{Zaheer2017DeepS} model, extending it to use complex numbers. Our code is available online.\footnote{\url{https://github.com/jdebened/ComplexDeepSets}}

\subsection{Models}

The DeepSets model computes a vector representation for each input symbol and sums them to discard ordering information. We may think of the elementwise layers as computing the log-weights of a diagonal multiset automaton, and the summation layer as computing the forward log-weights of the multiset. (The logs are needed because DeepSets adds, whereas multiset automata multiply.) However, DeepSets uses only real weights, whereas our multiset automata use complex weights. Thus, DeepSets can be viewed as using a multiset representation which is a special case of ours.

We conduct experiments comparing the DeepSets model \citep{Zaheer2017DeepS}, a GRU model, an LSTM model, and our complex multiset model. The code and layer sizes for the three baselines come from the DeepSets paper.\footnote{\url{https://github.com/manzilzaheer/DeepSets/blob/master/DigitSum/text_sum.ipynb}} See Figure~\ref{fig:Architecture} for layer types and sizes for the three baseline models.

\begin{figure*}
    \centering
    \scalebox{0.8}{%
    \begin{tikzpicture}[x=2.5cm,y=1.5cm]
    \begin{scope}[every node/.style={align=center,anchor=east}]
    \node at (-0.5,5) { LSTM };
    \node at (-0.5,4) { GRU };
    \node at (-0.5,3) { DeepSets };
    \node at (-0.5,1) { Ours };
    \end{scope}
    \begin{scope}[every node/.style={draw,rectangle,align=center,minimum width=2.1cm,minimum height=1.3cm}]
    \node(lstm-input) at (0,5) {Input Layer\\$n$};
    \node(lstm-embed) at (1,5) {Embedding \\$n \times 100$};
    \node(lstm-lstm) at (2.5,5) {LSTM \\$50$};
    \node(lstm-out) at (4,5) {Dense \\$1$};
    \draw (lstm-input) edge (lstm-embed);
    \draw (lstm-embed) edge (lstm-lstm);
    \draw (lstm-lstm) edge (lstm-out);
    
    \node(gru-input) at (0,4) {Input Layer\\$n$};
    \node(gru-embed) at (1,4) {Embedding \\$n \times 100$};
    \node(gru-gru) at (2.5,4) {GRU \\$80$};
    \node(gru-out) at (4,4) {Dense \\$1$};
    \draw (gru-input) edge (gru-embed);
    \draw (gru-embed) edge (gru-gru);
    \draw (gru-gru) edge (gru-out);
    
    \node(ds-input) at (0,3) {Input Layer\\$n$};
    \node(ds-embed) at (1,3) {Embedding \\$n \times 100$};
    \node(ds-dense) at (2,3) {Dense \\$n \times 30$};
    \node(ds-sum) at (3,3) {Sum \\$30$};
    \node(ds-out) at (4,3) {Dense \\$1$};
    \draw (ds-input) edge (ds-embed);
    \draw (ds-embed) edge (ds-dense);
    \draw (ds-dense) edge (ds-sum);
    \draw (ds-sum) edge (ds-out);

    \node(msa-input) at (0,1) {Input Layer\\$n$};
    \node(msa-embed-r) at (1,2) {Embedding $r$ \\$n \times 50$};
    \node(msa-embed-a) at (1,1) {Embedding $a$ \\$n \times 50$};
    \node(msa-embed-b) at (1,0) {Embedding $b$ \\$n \times 50$};
    \node(msa-prod) at (3,1) {Complex \\ Product \\$150$};
    \node(msa-out) at (4,1) {Dense \\$1$};
    \draw (msa-input) edge (msa-embed-r) edge (msa-embed-a) edge (msa-embed-b);
    \draw (msa-embed-r) edge (msa-prod);
    \draw (msa-embed-a) edge (msa-prod);
    \draw (msa-embed-b) edge (msa-prod);
    \draw (msa-prod) edge (msa-out);
    \end{scope}
    \end{tikzpicture}%
    }
    \caption{Models compared in Section~\ref{sec:deepsets}.  Each cell indicates layer type and output dimension.
    }
    \label{fig:Architecture}
\end{figure*}
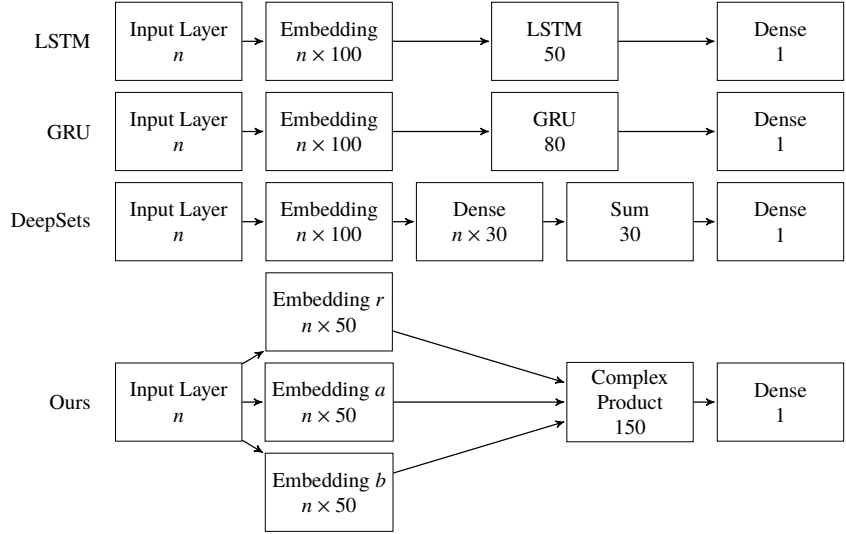

In our system, to avoid underflow when multiplying many complex numbers, we store each complex number as $e^r(a+bi)$ where $r$, $a$, and $b$ are real and $a$ and $b$ are normalized such that $a^2+b^2=1$ prior to multiplication. Thus, for each complex-valued parameter, we have three real-valued scalars ($r$, $a$, and $b$) to learn. To this end, each input is fed into three separate embedding layers of size $50$ (for $r$, $a$, and $b$).  Since the weight on states occurs in complex conjugate pairs within the diagonalized multiset automata, we only need to store half the states.  This is why we use $50$ rather than $100$ for our embeddings. (While the DeepSets code uses a dense layer at this point, in our network, we found that we could feed the embeddings directly into a complex multiplication layer to discard ordering information. This reduced the number of parameters for our model and did not affect performance.)  The output of this is then a new $r$, $a$, and $b$ which are concatenated and fed into a final dense layer as before to obtain the output.  Since our diagonalized automata have complex initial weights ($\lambda'$), we also tried learning a complex initial weight vector $\lambda'$, but this had no effect on performance.

The total number of parameters for each model was 4,161 parameters for the DeepSets model, 31,351 parameters for the LSTM model, 44,621 parameters for the GRU model, and 1,801 parameters for our model. In order to eliminate number of parameters as a difference from our model to the DeepSets model, we also tried the DeepSets model without the first dense layer and with embedding sizes of 150 to exactly match the number of parameters of our model, and the results on the test tasks were not significantly different from the baseline DeepSets model.

For tasks 1 and 2, we used mean squared error loss, a learning rate decay of 0.5 after the validation loss does not decrease for 2 epochs, and early stopping after the validation loss does not decrease for 10 epochs.  

\subsection{Experiments}

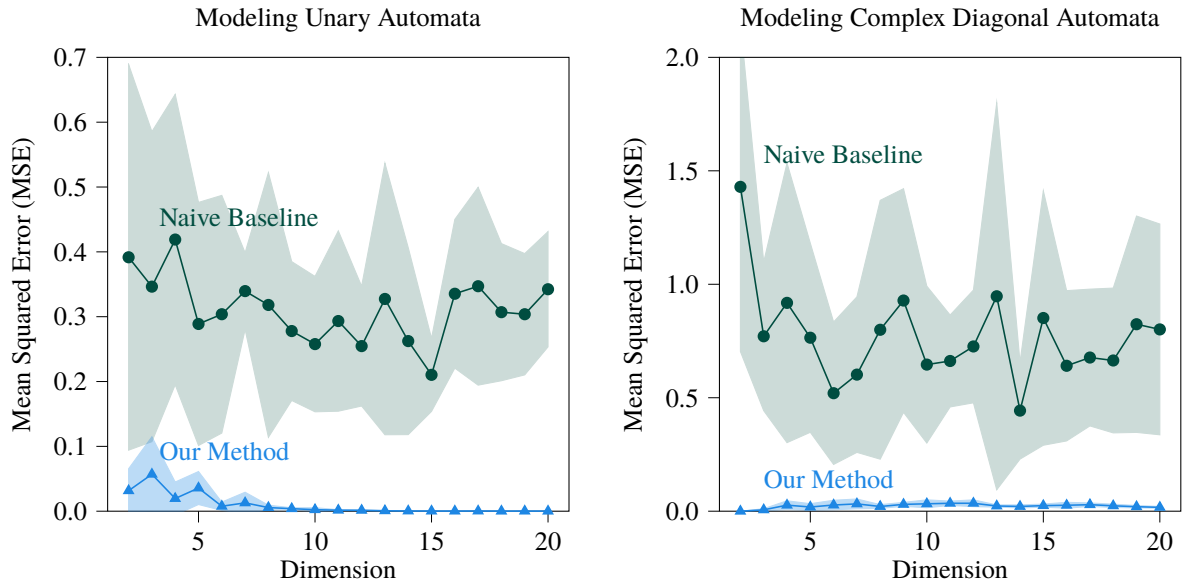
\begin{figure*}
    \centering
\begin{tikzpicture}
\definecolor{color0}{rgb}{0.84375,0.10546875,0.375}
\definecolor{color1}{rgb}{0.1171875,0.53125,0.89453125}
\definecolor{color2}{rgb}{0,0.30078125,0.25}
\definecolor{color3}{rgb}{0.99609375,0.75390625,0.02734375}

\begin{axis}[
name=ax1,
height=\figureheight,width=\figurewidth,
legend cell align={left},
legend style={fill opacity=0.8, draw opacity=1, text opacity=1, at={(1,0.5)}, anchor=west, draw=white!80!black},
tick align=outside,
tick pos=left,
title={Modeling Unary Automata},
x grid style={white!69.0196078431373!black},
xlabel={Dimension},
xmin=1.1, xmax=20.9,
xtick style={color=black},
y grid style={white!69.0196078431373!black},
ylabel={Mean Squared Error (MSE)}, ylabel near ticks,
ymin=0, ymax=0.7,
ytick style={color=black},
ytick={0,0.1,0.2,0.3,0.4,0.5,0.6,0.7},
yticklabels={0.0,0.1,0.2,0.3,0.4,0.5,0.6,0.7}
]
\path [draw=color2, fill=color2, opacity=0.2]
(axis cs:2,0.689196274990134)
--(axis cs:2,0.0941156866117145)
--(axis cs:3,0.107226127106975)
--(axis cs:4,0.194980105684975)
--(axis cs:5,0.101287691748954)
--(axis cs:6,0.120502734740395)
--(axis cs:7,0.280244224530877)
--(axis cs:8,0.113928187996778)
--(axis cs:9,0.170756085539777)
--(axis cs:10,0.153285866571921)
--(axis cs:11,0.154213297754367)
--(axis cs:12,0.162196464909031)
--(axis cs:13,0.117896204031415)
--(axis cs:14,0.118112781715981)
--(axis cs:15,0.15388702409852)
--(axis cs:16,0.220661815938779)
--(axis cs:17,0.194400665099929)
--(axis cs:18,0.201203121992104)
--(axis cs:19,0.210239874590211)
--(axis cs:20,0.253443054760677)
--(axis cs:20,0.431243017645524)
--(axis cs:20,0.431243017645524)
--(axis cs:19,0.397361472338127)
--(axis cs:18,0.412749765779577)
--(axis cs:17,0.499605750484937)
--(axis cs:16,0.450117602167033)
--(axis cs:15,0.266741655928342)
--(axis cs:14,0.406640442201712)
--(axis cs:13,0.536833594552753)
--(axis cs:12,0.347070097508781)
--(axis cs:11,0.432379139142632)
--(axis cs:10,0.36217160580767)
--(axis cs:9,0.38496183558875)
--(axis cs:8,0.522278984415962)
--(axis cs:7,0.398704059438178)
--(axis cs:6,0.487128798035955)
--(axis cs:5,0.476281229058792)
--(axis cs:4,0.642997644059863)
--(axis cs:3,0.585462991492162)
--(axis cs:2,0.689196274990134)
--cycle;

\path [draw=color1, fill=color1, opacity=0.3]
(axis cs:2,0.0654957741498947)
--(axis cs:2,-0.00228920951485634)
--(axis cs:3,-0.000836968421936035)
--(axis cs:4,-0.00592000037431717)
--(axis cs:5,0.0102813616394997)
--(axis cs:6,0.00120736099779606)
--(axis cs:7,-0.00355401169508696)
--(axis cs:8,0.00185155775398016)
--(axis cs:9,0.00248817424289882)
--(axis cs:10,-0.000772082712501287)
--(axis cs:11,0.000720475392881781)
--(axis cs:12,0.000561130116693676)
--(axis cs:13,0.000327853776980191)
--(axis cs:14,0.000146806036354974)
--(axis cs:15,9.83891222858801e-05)
--(axis cs:16,0.000122921526781283)
--(axis cs:17,0.00018162396736443)
--(axis cs:18,5.43491769349203e-05)
--(axis cs:19,3.63366125384346e-05)
--(axis cs:20,7.09749874658883e-05)
--(axis cs:20,0.000142012038850226)
--(axis cs:20,0.000142012038850226)
--(axis cs:19,0.000301490013953298)
--(axis cs:18,0.000184007556526922)
--(axis cs:17,0.000406881677918136)
--(axis cs:16,0.000254923768807203)
--(axis cs:15,0.000314371776767075)
--(axis cs:14,0.000561600318178535)
--(axis cs:13,0.00103755295276642)
--(axis cs:12,0.00218793051317334)
--(axis cs:11,0.00239184312522411)
--(axis cs:10,0.00545188039541245)
--(axis cs:9,0.00531170051544905)
--(axis cs:8,0.00932716298848391)
--(axis cs:7,0.0292817987501621)
--(axis cs:6,0.0140987066552043)
--(axis cs:5,0.0612265020608902)
--(axis cs:4,0.0450256168842316)
--(axis cs:3,0.114837124943733)
--(axis cs:2,0.0654957741498947)
--cycle;

\addplot [semithick, color2, mark=*, mark size=2, mark options={solid}]
table {%
2 0.391655980800924
3 0.346344559299569
4 0.418988874872419
5 0.288784460403873
6 0.303815766388175
7 0.339474141984527
8 0.31810358620637
9 0.277858960564263
10 0.257728736189795
11 0.293296218448499
12 0.254633281208906
13 0.327364899292084
14 0.262376611958847
15 0.210314340013431
16 0.335389709052906
17 0.347003207792433
18 0.30697644388584
19 0.303800673464169
20 0.3423430362031
}node[pos=0, inner sep=0.4cm, above right]{Naive Baseline};
\addlegendentry{Baseline}
\addplot [semithick, color1, mark=triangle*, mark size=2, mark options={solid}]
table {%
2 0.031603280454874
3 0.0570000782608986
4 0.0195528082549572
5 0.0357539318501949
6 0.00765303382650018
7 0.0128638939931989
8 0.00558936037123203
9 0.00389993749558926
10 0.00233989884145558
11 0.00155615922994912
12 0.00137453025672585
13 0.000682703335769475
14 0.000354203191818669
15 0.000206380456802435
16 0.000188922655070201
17 0.000294252822641283
18 0.000119178366730921
19 0.000168913320521824
20 0.000106493513158057
}node[pos=0, inner sep=0.4cm, above right]{Our Method};
\addlegendentry{Model MSE}
\legend{}
\end{axis}

\begin{axis}[
at={(ax1.south east)},
xshift=2cm,
height=\figureheight,width=\figurewidth,
legend cell align={left},
legend style={fill opacity=0.8, draw opacity=1, text opacity=1, at={(1,0.5)}, anchor=west, draw=white!80!black},
tick align=outside,
tick pos=left,
title={Modeling Complex Diagonal Automata},
x grid style={white!69.0196078431373!black},
xlabel={Dimension},
xmin=1.1, xmax=20.9,
xtick style={color=black},
y grid style={white!69.0196078431373!black},
ylabel={Mean Squared Error (MSE)}, ylabel near ticks,
ymin=0, ymax=2,
ytick style={color=black},
ytick={0, 0.5, 1, 1.5, 2},
yticklabels={0.0, 0.5, 1.0, 1.5, 2.0}
]
\path [draw=color2, fill=color2, opacity=0.2]
(axis cs:2,2.15495748461043)
--(axis cs:2,0.70362633898093)
--(axis cs:3,0.442937405894886)
--(axis cs:4,0.302018949664274)
--(axis cs:5,0.347651723089996)
--(axis cs:6,0.206037601501157)
--(axis cs:7,0.259177758466106)
--(axis cs:8,0.229201760228923)
--(axis cs:9,0.43555030477474)
--(axis cs:10,0.300261064443837)
--(axis cs:11,0.459340977282861)
--(axis cs:12,0.477503448113983)
--(axis cs:13,0.0941419871918839)
--(axis cs:14,0.229534410440114)
--(axis cs:15,0.290568977515248)
--(axis cs:16,0.309636348764392)
--(axis cs:17,0.375718926427968)
--(axis cs:18,0.345059707346479)
--(axis cs:19,0.347203954503055)
--(axis cs:20,0.336598113990905)
--(axis cs:20,1.26554730439874)
--(axis cs:20,1.26554730439874)
--(axis cs:19,1.30071213822515)
--(axis cs:18,0.983265378562644)
--(axis cs:17,0.978589488848119)
--(axis cs:16,0.97194689796159)
--(axis cs:15,1.41201459818325)
--(axis cs:14,0.656949842891228)
--(axis cs:13,1.80010376222985)
--(axis cs:12,0.974445766845401)
--(axis cs:11,0.864520684680045)
--(axis cs:10,0.991837636796176)
--(axis cs:9,1.42113761975402)
--(axis cs:8,1.36988073978674)
--(axis cs:7,0.944624187281792)
--(axis cs:6,0.833901126582115)
--(axis cs:5,1.18124695669371)
--(axis cs:4,1.53417006545451)
--(axis cs:3,1.09871495014053)
--(axis cs:2,2.15495748461043)
--cycle;

\path [draw=color1, fill=color1, opacity=0.3]
(axis cs:2,1.27993552534583e-05)
--(axis cs:2,-1.21820096193036e-06)
--(axis cs:3,0.00105290162764148)
--(axis cs:4,0.00741413484931536)
--(axis cs:5,0.00382889837018461)
--(axis cs:6,0.00615452262949409)
--(axis cs:7,0.0105412110164967)
--(axis cs:8,0.0124618597398587)
--(axis cs:9,0.0212795064642099)
--(axis cs:10,0.0151203784170632)
--(axis cs:11,0.0253393181604814)
--(axis cs:12,0.0199864694690435)
--(axis cs:13,0.018093444521187)
--(axis cs:14,0.0141966559534718)
--(axis cs:15,0.017782907470904)
--(axis cs:16,0.0130879217612641)
--(axis cs:17,0.0210313878249229)
--(axis cs:18,0.0164000057594983)
--(axis cs:19,0.0145303774571007)
--(axis cs:20,0.0131478970142144)
--(axis cs:20,0.0208714803529078)
--(axis cs:20,0.0208714803529078)
--(axis cs:19,0.0251062322168218)
--(axis cs:18,0.0311598731366966)
--(axis cs:17,0.0364321360688297)
--(axis cs:16,0.0392385744506785)
--(axis cs:15,0.0313984878577053)
--(axis cs:14,0.0277626832310188)
--(axis cs:13,0.0278624101600694)
--(axis cs:12,0.0501222139487623)
--(axis cs:11,0.0444113542977391)
--(axis cs:10,0.0502054611354246)
--(axis cs:9,0.0388226705953724)
--(axis cs:8,0.0290842763225345)
--(axis cs:7,0.0539262832260809)
--(axis cs:6,0.0482960261545697)
--(axis cs:5,0.0331783872532834)
--(axis cs:4,0.0459058560491812)
--(axis cs:3,0.0110718636834132)
--(axis cs:2,1.27993552534583e-05)
--cycle;

\addplot [semithick, color2, mark=*, mark size=2, mark options={solid}]
table {%
2 1.42929191179568
3 0.770826178017709
4 0.918094507559393
5 0.764449339891852
6 0.519969364041636
7 0.601900972873949
8 0.799541250007832
9 0.928343962264382
10 0.646049350620006
11 0.661930830981453
12 0.725974607479692
13 0.947122874710866
14 0.443242126665671
15 0.85129178784925
16 0.640791623362991
17 0.677154207638043
18 0.664162542954562
19 0.823958046364103
20 0.801072709194821
}node[pos=0, inner sep=0.3cm, above right]{Naive Baseline};
\addlegendentry{Baseline}
\addplot [semithick, color1, mark=triangle*, mark size=2, mark options={solid}]
table {%
2 5.79057714576396e-06
3 0.00606238265552735
4 0.0266599954492483
5 0.018503642811734
6 0.0272252743920319
7 0.0322337471212888
8 0.0207730680311966
9 0.0300510885297911
10 0.0326629197762439
11 0.0348753362291103
12 0.0350543417089029
13 0.0229779273406282
14 0.0209796695922453
15 0.0245906976643046
16 0.0261632481059713
17 0.0287317619468763
18 0.0237799394480975
19 0.0198183048369612
20 0.0170096886835611
}node[pos=0, inner sep=0.3cm, above right]{Our Method};
\addlegendentry{Model MSE}
\legend{}
\end{axis}

\end{tikzpicture}
    \caption{Results for Task 0: Training loss for modeling multiset automata with learned complex diagonal automata. For each set of data, the learned automaton with a complex diagonal transition matrix is able to approximate a unary (left) or diagonal (right) multiset automaton using the same number of states.  Error bands show $\pm 1$ standard deviation.}
    \label{fig:Unary_random}
\end{figure*}

\paragraph{Task 0: Recovering multiset automata}
To test how well complex diagonal automata can be trained from string weights, we generate a multiset automaton and train our model on strings together with their weights according to the automaton. Since we want to test the modeling power of complex multiset automata, we remove the final dense layer and replace it with a simple summation for this task only.  It is worth noting that this is equivalent to setting all final weights to $1$ by multiplying the final weights into the initial weights, therefore we lose no modeling power by doing this. The embedding dimension is set to match the number of states in the generated multiset automaton.  The size of the input alphabet is set to $1$ for the unary case and $5$ for the complex diagonal case.  We train by minimizing mean squared error.  
As a baseline for comparison, we compute the average string weight generated by each automaton and use that as the prediction for the weight of all strings generated by that automaton.

We generate unary automata by sampling uniformly from the Haar distribution over orthogonal matrices \cite{mezzadri:2007}.  The training strings are every unary string from length $0$ to $20$ ($21$ training strings).  Due to the small number of training strings, we let these models train for up to $30$k epochs with early stopping when loss does not improve for $100$ epochs. We generate complex diagonal automata by sampling real and imaginary coefficients uniformly from $[0,1]$ then renormalizing by the largest entry so the matrix has spectral radius $1$.  String lengths are fixed at $5$ to avoid large discrepancies in string weight magnitudes.  All strings of length $5$ are generated as training data. For each dimension, $10$ automata are generated.  We train $10$ models on each and select the best model as indicative of how well our model is capable of learning the original multiset automaton.

\begin{figure*}
    \centering
\begin{tikzpicture}

\definecolor{color0}{rgb}{0.83921568627451,0.152941176470588,0.156862745098039}
\definecolor{color1}{rgb}{1,0.498039215686275,0.0549019607843137}
\definecolor{color2}{rgb}{0,0.30078125,0.25}
\definecolor{color3}{rgb}{0.1171875,0.53125,0.89453125}

\begin{axis}[
name=ax1,
height=\figureheight,
legend cell align={left},
legend columns=-1,
legend style={font=\small, at={(0.5,-0.2)},anchor=north west},
tick align=outside,
tick pos=left,
title={Digit Sum},
width=\figurewidth,
x grid style={white!69.01960784313725!black},
xlabel={Number of digits},
xmin=5, xmax=95,
xtick style={color=black},
y grid style={white!69.01960784313725!black},
ylabel={Accuracy}, ylabel near ticks,
ymin=0, ymax=1.1,
ytick style={color=black},
ytick={0,0.2,0.4,0.6,0.8,1,1.2},
yticklabels={0.0,0.2,0.4,0.6,0.8,1.0,1.2}
]
\addplot [solid, thick, color0, mark=|, mark size=4, mark options={solid, rotate=45}]
table {%
5 1
10 1
15 1
20 1
25 1
30 1
35 1
40 1
45 1
50 1
55 1
60 1
65 1
70 1
75 1
80 1
85 1
90 1
95 1
};
\addlegendentry{DeepSets}
\addplot [solid, thick, color1, mark=|, mark size=4, mark options={solid}]
table {%
5 1
10 1
15 1
20 1
25 1
30 1
35 1
40 1
45 0.9995
50 0.9695
55 0.6865
60 0.2234
65 0.0256
70 0.0011
75 0.0001
80 0
85 0
90 0
95 0
};
\addlegendentry{LSTM}
\addplot [dashed, very thick, color2, mark=|, mark size=4, mark options={solid,rotate=-45}]
table {%
5 0.9981
10 0.9998
15 1
20 0.9999
25 1
30 1
35 0.9999
40 0.9995
45 0.9991
50 0.963
55 0.6751
60 0.2117
65 0.0257
70 0.0013
75 0.0001
80 0
85 0
90 0
95 0
};
\addlegendentry{GRU}
\addplot [dashed, thick, color3, mark=square*, mark size=1, mark options={solid}]
table {%
5 1
10 1
15 1
20 1
25 1
30 1
35 1
40 1
45 1
50 1
55 1
60 1
65 1
70 1
75 1
80 1
85 1
90 1
95 1
};
\addlegendentry{Our Method}
\end{axis}

\begin{axis}[
at={(ax1.south east)},
xshift=2cm,
height=\figureheight,
legend cell align={left},
legend style={at={(1,100)}, anchor=west, draw=white!80.0!black},
tick align=outside,
tick pos=left,
title={Digit Sum, Units Digit},
width=\figurewidth,
x grid style={white!69.01960784313725!black},
xlabel={Number of digits},
xmin=5, xmax=95,
xtick style={color=black},
y grid style={white!69.01960784313725!black},
ylabel={Accuracy}, ylabel near ticks,
ymin=0, ymax=1.1,
ytick style={color=black},
ytick={0,0.2,0.4,0.6,0.8,1,1.2},
yticklabels={0.0,0.2,0.4,0.6,0.8,1.0,1.2}
]
\addplot [solid, thick, color0, mark=|, mark size=4, mark options={solid, rotate=45}]
table {%
5 0.1
10 0.0991
15 0.0949
20 0.1032
25 0.1001
30 0.0977
35 0.0975
40 0.0978
45 0.1033
50 0.0997
55 0.0995
60 0.0996
65 0.1007
70 0.0973
75 0.1036
80 0.0986
85 0.0958
90 0.0975
95 0.0982
};
\addlegendentry{DeepSets}
\addplot [solid, thick, color1, mark=|, mark size=4, mark options={solid}]
table {%
5 0.0963
10 0.1003
15 0.0985
20 0.1034
25 0.103
30 0.0996
35 0.0989
40 0.0984
45 0.099
50 0.0976
55 0.1022
60 0.1053
65 0.1001
70 0.1002
75 0.099
80 0.1009
85 0.1011
90 0.096
95 0.0952
};
\addlegendentry{LSTM}
\addplot [dashed, very thick, color2, mark=|, mark size=4, mark options={solid,rotate=-45}]
table {%
5 0.0979
10 0.0987
15 0.0999
20 0.1034
25 0.098
30 0.0994
35 0.0983
40 0.1002
45 0.0977
50 0.0973
55 0.0979
60 0.104
65 0.1002
70 0.0985
75 0.0983
80 0.0987
85 0.0997
90 0.098
95 0.0932
};
\addlegendentry{GRU}
\addplot [dashed, thick, color3, mark=square*, mark size=1, mark options={solid}]
table {%
5 1
10 1
15 1
20 1
25 1
30 1
35 1
40 1
45 1
50 1
55 1
60 1
65 1
70 1
75 1
80 1
85 1
90 1
95 1
};
\addlegendentry{Our Method}
\legend{}
\end{axis}

\end{tikzpicture}
    \caption{Results for Task 1 (left) and Task 2 (right).  In task 1, the LSTM and GRU models were unable to generalize to examples larger than seen in training, while DeepSets and our model generalize to all test lengths.  For task 2, only our model is able to return the correct units digit for all test lengths.  The GRU, LSTM, and DeepSets models fail to learn any behavior beyond random guessing.}
    \label{fig:DigitSum}
\end{figure*}
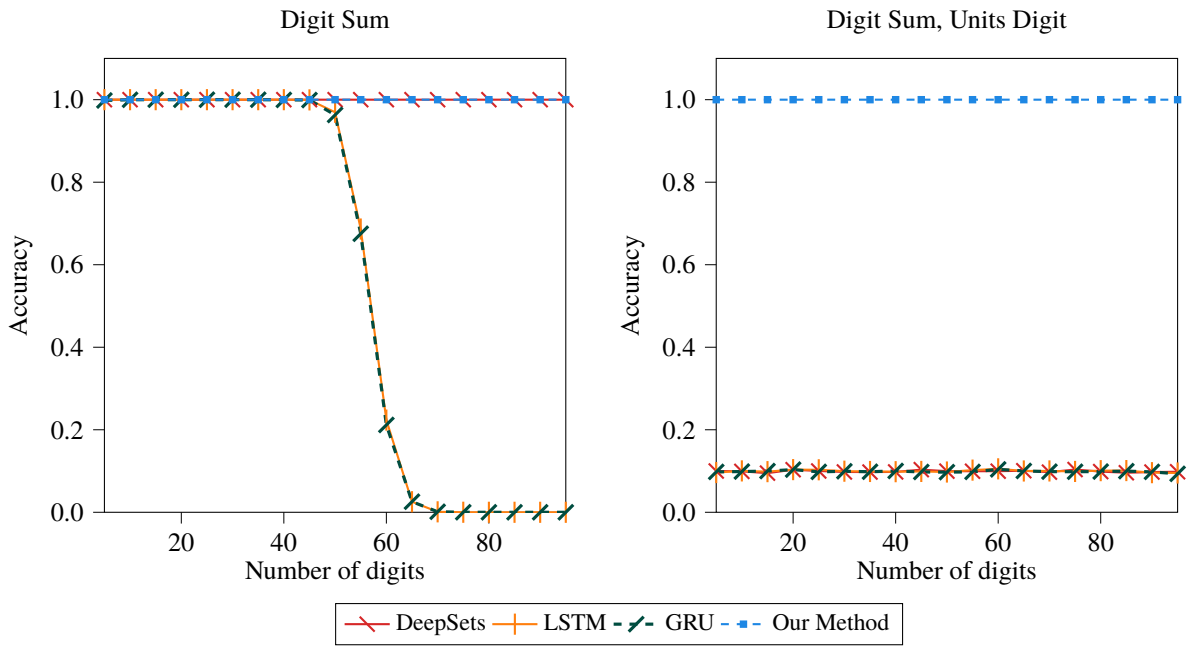

\paragraph{Task 1: Sum of digits}
In this task, taken from \citet{Zaheer2017DeepS}, the network receives a set of single digit integers as input and must output the sum of those digits.  The output is rounded to the nearest integer to measure accuracy. The training set consisted of 100k randomly generated sequences of digits 1--9 with lengths from 1 to 50.  They were fed to each network in the order in which they were generated (which only affects GRU and LSTM).  This was then split into training and dev with approximately a 99/1 split.  The test set consisted of randomly generated sequences of lengths that were multiples of 5 from 5 to 95. Figure~\ref{fig:DigitSum} shows that both our model and  DeepSets obtain perfect accuracy on the test data, while the LSTM and GRU fail to generalize to longer sequences.

\paragraph{Task 2: Returning units digit of a sum}
The second task is similar to the first, but only requires returning the units digit of the sum. The data and evaluation are otherwise the same as task 1. Here, random guessing within the output range of 0--9 achieves approximately $10\%$ accuracy.  Figure~\ref{fig:DigitSum} shows that DeepSets, LSTM, and GRU  are unable to achieve performance better than random guessing on the test data.  Our method is able to return the units digit perfectly for all test lengths, because it effectively learns to use the cyclic nature of complex multiplication to produce the units digit.   

\section{Conclusion}
We have proven that weighted multiset automata can be approximated by automata with (complex) diagonal transition matrices.  This formulation permits simpler elementwise multiplication instead of matrix multiplication, and requires fewer parameters when using the same number of states.  We show that this type of automaton naturally arises within existing neural architectures, and that this representation generalizes two existing multiset representations, the Transformer's position encodings and DeepSets. Our results provide new theoretical and intuitive justification for these models, and, in one case, lead to a change in the model that drastically improves its performance.

\section*{Acknowledgements}

We would like to thank Toan Nguyen for providing his implementation of the Transformer and answering many questions about it.

This research is based upon work supported in part by the Office of the Director of National Intelligence (ODNI), Intelligence Advanced Research Projects Activity (IARPA), via contract \#FA8650-17-C-9116. The views and conclusions contained herein are those of the authors and should not be interpreted as necessarily representing the official policies, either expressed or implied, of ODNI, IARPA, or the U.S. Government. The U.S. Government is authorized to reproduce and distribute reprints for governmental purposes notwithstanding any copyright annotation therein.

\bibliography{refs}

\clearpage

\appendix

\section{Proof of Theorem~\ref{prop:bound}}
\label{sec:bound}
Our strategy is to form a Jordan decomposition of $A$ and show that the desired bounds hold for each Jordan block. To this end, we first prove the following lemmas.

\begin{lemma} \label{lem:nonzero}
If $J$ is a Jordan block with nonzero eigenvalue, then for any $\epsilon > 0$ there is a complex matrix $D$ such that $J+D$ is diagonalizable in $\mathbb{C}$ and \[ \frac{\|(J+D)^n-J^n\|}{\|J^n\|} \leq n\epsilon. \]
\end{lemma}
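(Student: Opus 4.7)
\medskip
\noindent\emph{Proof proposal.} The plan is to construct $D$ explicitly as a diagonal perturbation that rotates the eigenvalues of $J$ onto the circle $|z|=|\lambda|$, and then read off the bound from an entrywise calculation. The guiding observation is that the inequality $\|(J+D)^n-J^n\|/\|J^n\|\le n\epsilon$ must hold for all $n\ge 0$; if some eigenvalue of $J+D$ has magnitude $|\lambda|(1+c)$ with $c>0$, then its $n$-th power grows like $(1+c)^n|\lambda|^n$, which eventually swamps any linear-in-$n$ bound. So any viable $D$ must keep every new eigenvalue exactly on the circle $|z|=|\lambda|$; generic real or radial perturbations are ruled out, which makes a complex, rotation-style perturbation essential.

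Write $J=\lambda I+N$ where $N$ is the $k\times k$ shift with $N_{i,i+1}=1$. Given $\epsilon>0$, pick distinct angles $\phi_1,\ldots,\phi_k$ with $\max_j|\phi_j|\le\epsilon$, and set $D=\lambda\,\diag(e^{i\phi_1}-1,\ldots,e^{i\phi_k}-1)$. Then $M:=J+D$ is upper triangular with distinct diagonal entries $\lambda e^{i\phi_j}$, so it is diagonalizable in $\mathbb{C}$, and every eigenvalue has magnitude exactly $|\lambda|$.

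The core step is an entrywise formula. Since $M$ is upper triangular with diagonal $\lambda e^{i\phi_j}$ and $1$'s on the superdiagonal, expanding $(M^n)_{i,i+s}$ as a sum over monotone lattice paths in the associated graph expresses it as $\lambda^{n-s}$ times the complete homogeneous symmetric polynomial $h_{n-s}(e^{i\phi_i},\ldots,e^{i\phi_{i+s}})$; similarly $(J^n)_{i,i+s}=\lambda^{n-s}h_{n-s}(1,\ldots,1)=\lambda^{n-s}\binom{n}{s}$. Writing the entrywise difference as a sum over the $\binom{n}{s}$ compositions $(m_0,\ldots,m_s)$ of $n-s$, each summand is $\lambda^{n-s}(e^{i(m_0\phi_i+\cdots+m_s\phi_{i+s})}-1)$, and $|e^{i\theta}-1|\le|\theta|$ together with $\sum_l m_l=n-s$ gives termwise magnitude at most $(n-s)\epsilon|\lambda|^{n-s}$. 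Summing over compositions yields $|(M^n-J^n)_{i,i+s}|\le(n-s)\epsilon\,|(J^n)_{i,i+s}|\le n\epsilon\,|(J^n)_{i,i+s}|$ entrywise; monotonicity of the Frobenius norm then delivers $\|M^n-J^n\|\le n\epsilon\|J^n\|$.

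The main obstacle is pinning down the right $D$. A naive small diagonal perturbation such as $\delta\,\diag(1,2,\ldots,k)$ shifts at least one eigenvalue radially, and the resulting factor $(1+\delta/\lambda)^n$ blows up exponentially, so the bound $n\epsilon$ fails for large $n$. The choice $\lambda(e^{i\phi_j}-1)$ is engineered so that the new eigenvalues $\lambda e^{i\phi_j}$ lie on the same circle as $\lambda$; together with the nilpotent-upper-triangular structure that survives the perturbation, this is precisely what permits the relative error to grow only linearly in $n$, uniformly in $n$.
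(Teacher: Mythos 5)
Your proof is correct, and it mirrors the paper's very closely: both exploit the bidiagonal structure of a Jordan block to expand $(\,\cdot\,)^n$ entrywise as a sum over compositions (complete homogeneous symmetric polynomials of the diagonal entries), bound each perturbed term against the corresponding unperturbed term $\binom{n}{s}\lambda^{n-s}$, and finish via monotonicity of the Frobenius norm. The only substantive difference is the direction of the diagonal perturbation: you rotate each eigenvalue tangentially to $\lambda e^{i\phi_j}$, staying on the circle $|z|=|\lambda|$, and extract the termwise factor $(n-s)\epsilon$ from $|e^{i\theta}-1|\le|\theta|$; the paper instead shrinks each eigenvalue radially inward to $\lambda(1-\delta_j)$ with $\delta_j\in(0,\epsilon]$ and gets the same factor from Bernoulli's inequality $1-(1-\epsilon)^{n-s}\le(n-s)\epsilon$. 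Both choices are equally valid. One thing to flag: your surrounding motivation is overstated. You claim that ``any viable $D$ must keep every new eigenvalue exactly on the circle $|z|=|\lambda|$'' and that ``radial perturbations are ruled out,'' but the paper's $D$ is precisely a radial (inward) perturbation and works fine. The genuine obstruction is only to moving an eigenvalue \emph{outside} the disk $|z|\le|\lambda|$, where $(1+c)^n$ eventually outgrows any $n\epsilon$ bound; any perturbation keeping eigenvalues on or inside that disk is admissible, and the tangential one is just one of many. This misstep does not affect the validity of your construction or the bound you derive.
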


\begin{proof}
The powers of $J$ look like
\begin{align*}
J^n &= 
\begin{bmatrix}
\binom{n}{0} \lambda^n & \binom{n}{1} \lambda^{n-1} & \binom{n}{2} \lambda^{n-2} & \cdots \\
& \binom{n}{0} \lambda^n & \binom{n}{1} \lambda^{n-1} & \cdots \\
& & \binom{n}{0} \lambda^n & \cdots \\
& & & \ddots
\end{bmatrix}.
\end{align*}
More concisely,
\[
[J^n]_{jk} =
\begin{cases}
\binom{n}{k-j} \, \lambda^{n-k+j} & \text{if $0 \leq k - j \leq n$} \\
0 & \text{otherwise.}
\end{cases}\]

We choose $D$ to perturb the diagonal elements of $J$ towards zero; that is, let $D$ be a diagonal matrix whose elements are in $[-\epsilon\lambda, 0)$ and are all different. 
This shrinks the diagonal elements by a factor no smaller than $(1-\epsilon)$. So the powers of $(J+D)$ are, for $0 \leq k-j \leq n$:
\begin{align*}
[(J+D)^n]_{jk} &= c_{jk} \, [J^n]_{jk} \\
c_{jk} &\geq (1-\epsilon)^{n-k+j}.
\end{align*}
Simplifying the bound on $c_{jk}$ \citep{kozma:2019}:
\begin{equation}
c_{jk} \geq 1-(n-k+j)\epsilon  \geq 1-n\epsilon.
\end{equation}
The elements of $J^n$, for $0 \leq k-j \leq n$, are perturbed by:
\begin{align*}
[(J+D)^n-J^n]_{jk} &= (c_{jk}-1) [J^n]_{jk} \\
\left|[(J+D)^n-J^n]_{jk}\right| & \leq n\epsilon \left|[J^n]_{jk} \right|. \\
\intertext{Since $\|\mathord\cdot\|$ is monotonic,}
\left\|(J+D)^n-J^n\right\| & \leq n\epsilon \left\|J^n\right\| \\
\frac{\left\|(J+D)^n-J^n\right\|}{\left\|J^n\right\|} & \leq n\epsilon. \qedhere
\end{align*}

\end{proof}

\begin{lemma} \label{lem:zero}
If $J$ is a Jordan block with zero eigenvalue, then for any $\epsilon>0, r>0$, there is a complex matrix $D$ such that $J+D$ is diagonalizable in $\mathbb{C}$ and
\[ \|(J+D)^n - J^n \| \leq r^n\epsilon.\] \end{lemma}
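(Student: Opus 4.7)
The plan is to take $D = \delta\,\Omega$, where $\Omega = \diag(\omega_1, \ldots, \omega_m)$ is a fixed diagonal matrix with distinct complex entries $\omega_i$ of modulus at most $1$, and $\delta > 0$ is a small scalar chosen at the end. Since $J + D$ is upper triangular with distinct diagonal entries $\delta\omega_i$, it is automatically diagonalizable in $\mathbb{C}$. So the real work is to bound $\|(J+D)^n - J^n\| \leq r^n \epsilon$ uniformly over all $n \geq 0$ through a single choice of $\delta$.

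The key observation is a similarity identity. Let $Q_\delta = \diag(1, \delta, \delta^2, \ldots, \delta^{m-1})$. A direct calculation gives $Q_\delta^{-1} J Q_\delta = \delta J$, and since $Q_\delta$ commutes with every diagonal matrix, $Q_\delta^{-1}(J+D)Q_\delta = \delta(J+\Omega)$. Raising to the $n$-th power and using the parallel identity $Q_\delta^{-1} J^n Q_\delta = \delta^n J^n$,
\[
(J+D)^n - J^n \;=\; \delta^n\, Q_\delta\,\bigl[(J+\Omega)^n - J^n\bigr]\,Q_\delta^{-1}.
\]
Because the $\omega_i$'s are distinct and $|\omega_i| \leq 1$, $J+\Omega$ is diagonalizable with eigenvalues of modulus at most $1$, so $\|(J+\Omega)^n - J^n\|$ is bounded uniformly in $n$ by a constant $C_0$ depending only on $\Omega$. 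Combined with $\|Q_\delta\| = O(1)$ and $\|Q_\delta^{-1}\| = O(\delta^{-(m-1)})$ (valid for $\delta \leq 1$), this yields $\|(J+D)^n - J^n\| \leq C\,\delta^{n-m+1}$ with a constant $C$ independent of $n$.

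I would then split the analysis into two regimes. For $n \geq m$, we have $J^n = 0$ and the above reads $\|(J+D)^n\| \leq (C/\delta^{m-1})\,\delta^n$; taking $\delta < r$ small enough that the worst case $n = m$ satisfies $C\delta \leq r^m\epsilon$ ensures $\|(J+D)^n\| \leq r^n\epsilon$ for every $n \geq m$, since the ratio $(\delta/r)^n$ is then decreasing. For $n < m$ the similarity bound $C\delta^{n-m+1}$ is useless — the exponent is negative and the right-hand side blows up as $\delta \to 0$ — so I would instead expand $(J+D)^n - J^n$ directly as a sum over words in $\{J, D\}$ of length $n$ containing at least one $D$, bound each term by the product of its factors' norms, and obtain $\|(J+D)^n - J^n\| = O(\delta)$ for each of the finitely many $n \in \{1, \ldots, m-1\}$. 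A sufficiently small $\delta$ makes all of these $\leq r^n \epsilon$.

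The main obstacle is precisely this tension between the two regimes. The conjugation $Q_\delta$ that cleanly controls large $n$ introduces a condition number $\delta^{-(m-1)}$ that grows in exactly the direction we wish to shrink $\delta$; but this cost is absorbed by the exponential factor $(\delta/r)^n$ once $n \geq m$ and $\delta < r$, while for $n < m$ (only finitely many cases) the similarity machinery is simply abandoned in favor of a direct perturbation bound. All the resulting constraints on $\delta$ reduce to $\delta \leq$ (positive constant depending on $r$, $\epsilon$, $m$, $\Omega$), and so admit a common positive solution.
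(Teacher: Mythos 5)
Your proof is correct, but it takes a genuinely different route from the paper's. The paper works entirely at the level of matrix entries: for an upper-bidiagonal $J+D$ with diagonal entries in $(0,\delta]$, it writes down the explicit combinatorial bound $|[(J+D)^n-J^n]_{jk}| \leq \binom{n}{k-j}\delta^{n-k+j}$, crudely replaces $\binom{n}{k-j}$ by $2^n$, and then chooses $\delta = \min\{r/2,\,(r/2)^d\epsilon/d\}$ so that the resulting bound $2^n\delta^{\min\{0,n-d\}+1}d$ falls below $r^n\epsilon$ in both the $n\le d$ and $n>d$ regimes. Your argument instead exploits the scaling similarity $Q_\delta^{-1}(J+\delta\Omega)Q_\delta = \delta(J+\Omega)$ to factor the dependence on $\delta$ out front as $\delta^n$, pays for it with the condition number $\kappa(Q_\delta)=O(\delta^{-(m-1)})$, and observes that the $\delta^n$ swamps the $\delta^{-(m-1)}$ once $n\ge m$; the finitely many remaining $n<m$ are handled by a trivial $O(\delta)$ perturbation bound. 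What the paper's approach buys is that it is completely self-contained and elementary---no appeal to a uniform-in-$n$ bound on the powers of a diagonalizable matrix, no separate small-$n$ argument, and an explicit closed form for $\delta$ that feeds directly into the proof of Theorem~\ref{prop:bound}. What your approach buys is a cleaner conceptual picture: the entire $n$-dependence is isolated in a single factor $\delta^n$ by a change of basis, which makes the geometric decay transparent rather than emerging from a binomial estimate. Both are valid; the paper's is more explicit, yours is more structural.

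One small point worth being careful about if you were to write this up in full: the uniform constant $C_0 = \sup_n\|(J+\Omega)^n - J^n\|$ is finite because $J+\Omega$ is diagonalizable with spectral radius $\le 1$ and $J^n=0$ for $n\ge m$, but it does depend on the diagonalizing similarity of $J+\Omega$ (not just on $\|\Omega\|$). Since $\Omega$ is fixed once and for all before $\delta$ is chosen, this is harmless, but it should be stated so the reader does not worry about circularity in the choice of constants.
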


\begin{proof}
Since the diagonal elements of $J$ are all zero, we can't perturb them toward zero as in Lemma~\ref{lem:nonzero}; instead, let
\begin{align*}
\delta &= \min \left\{ \frac{r}2, \left(\frac{r}{2}\right)^d \frac{\epsilon}{d} \right\}
\end{align*}
and let $D$ be a diagonal matrix whose elements are in $(0, \delta]$ and are all different.
Then the elements of $((J+D)^n-J^n)$ are, for $0 \leq k-j < \min\{n, d\}$:
\begin{align*}
[(J+D)^n-J^n]_{jk} 
&\leq \binom{n}{k-j} \, \delta^{n-k+j} \\
&< 2^n \delta^{n-k+j} \\
&\leq 2^n \delta^{\min\{0,n-d\}+1}, \\
\intertext{and by monotonicity,}
\|(J+D)^n-J^n\| &\leq 2^n \delta^{\min\{0,n-d\}+1} d.
\end{align*}
To simplify this bound, we consider two cases. If $n \leq d$,
\begin{align*}
\|(J+D)^n-J^n\| &= 2^n \delta d \\ &\leq 2^n \left(\frac{r}2\right)^d \frac\epsilon{d} d \\ &= 2^{n-d} r^d \epsilon \\ &\leq r^n \epsilon.
\end{align*}
If $n > d$,
\begin{align*}
\|(J+D)^n-J^n\| &= 2^n \delta^{n-d+1} d \\
&\leq 2^n \delta^{n-d} \left(\frac{r}2\right)^d \frac\epsilon{d} d \\
&\leq 2^n \left(\frac{r}2\right)^{n-d} \left(\frac{r}2\right)^d \frac\epsilon{d} d \\
&= r^n \epsilon. \qedhere
\end{align*}
\end{proof}

Now we can combine the above two lemmas to obtain the desired bounds for a general matrix.
\begin{proof}[Proof of Theorem~\ref{prop:bound}]
Form the Jordan decomposition $A = PJP^{-1}$, where
\begin{align*}
J &= \begin{bmatrix}
J_1 & \\
& J_2 & \\
& & \ddots \\
& & & J_p \\
\end{bmatrix}
\end{align*}
and each $J_j$ is a Jordan block. 
Let $\kappa(P) = \|P\| \|P^{-1}\|$ be the Frobenius condition number of $P$. 

If $A$ is nilpotent, use Lemma~\ref{lem:zero} on each block $J_j$ to find a $D_j$ so that $\|(J_j + D_j)^n-J_j^n\| \leq \frac{r^n \epsilon}{\kappa(P) p}$. Combine the $D_j$ into a single matrix $D$, so that $\|(J + D)^n-J^n\| \leq \frac{r^n\epsilon}{\kappa(P)}$. 
Let $E = PDP^{-1}$, and then
\begin{align*}
\|(A + E)^n - A^n\| &= \|P((J+D)^n-J^n)P^{-1}\| \\
&\leq \kappa(P)\|(J+D)^n-J^n\| \\ 
&\leq \kappa(P) \frac{r^n \epsilon}{\kappa(P)} \\
&= r^n \epsilon.
\end{align*}

If $A$ is not nilpotent, then for each Jordan block $J_j$:
\begin{itemize}
\item If $J_j$ has nonzero eigenvalue, use Lemma~\ref{lem:nonzero} to find a $D_j$ such that  $\|(J_j + D_j)^n-J_j^n\| \leq \frac{n\epsilon}{\kappa(P)^2} \frac{\| J^n \|}{2p}$.
\item If $J_j$ has zero eigenvalue, use Lemma~\ref{lem:zero} to find a $D_j$ such that $\|(J_j + D_j)^n-J_j^n\| \leq \frac{n\epsilon}{\kappa(P)^2} \frac{\rho(J)^n}{2p} $.
\end{itemize}
Combine the $D_j$ into a single matrix $D$.
Then the total absolute error of all the blocks with nonzero eigenvalue is at most $\frac{n\epsilon}{\kappa(P)^2} \frac{\| J^n \|}{2}$.
And since $\rho(J)^n \leq \|J^n\|$, the total absolute error of all the blocks with zero eigenvalue is also at most $\frac{n\epsilon}{\kappa(P)^2} \frac{ \|J^n\|}{2}$. So the combined total is
\[\|(J+D)^n - J^n\| \leq \frac{n\epsilon}{\kappa(P)^2}  \|J^n\|.\]

Finally, let $E = PDP^{-1}$, and
\begin{align*}
\|(A+E)^n - A^n\| &= \|P((J+D)^n-J^n)P^{-1}\| \\
&\leq \kappa(P) \|((J+D)^n-J^n)\| \\
&\leq  \frac{n\epsilon}{\kappa(P)} \|J^n\| \\
&\leq \frac{n\epsilon}{\kappa(P)} \|P^{-1}A^nP\| \\
&\leq n\epsilon \|A^n\| \\
\frac{\|(A+E)^n - A^n\|}{\|A^n\|} &\leq n\epsilon. \qedhere
\end{align*}
\end{proof}

\section{Proof of Proposition~\ref{prop:regops}}
\label{sec:regops}
First, consider the $\oplus$ operation.
Let $\mu_1(a)$ (for all $a$) be the transition matrices of $M_1$. For any $\epsilon>0$, let $E_1(a)$ be the perturbations of the $\mu_1(a)$ such that $\|E_1(a)\| \leq \epsilon/2$ and the $\mu_1(a) + E_1(a)$ (for all $a$) are simultaneously diagonalizable. Similarly for $M_2$. Then the matrices $(\mu_1(a) + E_1(a)) \oplus (\mu_2(a) + E_2(a))$ (for all $a$) are simultaneously diagonalizable, and 
\begin{equation*}
\begin{split}
&\| (\mu_1(a) + E_1(a)) \oplus (\mu_2(a) + E_2(a)) - \mu_1(a) \oplus \mu_2(a) \| \\&\quad = \| E_1(a) \oplus E_2(a) \| \\ &\quad \leq \|E_1(a)\| + \|E_2(a)\| \\&\quad \leq \epsilon.
\end{split}
\end{equation*}

Next, we consider the $\shuffle$ operation.
Let $d_1$ and $d_2$ be the number of states in $M_1$ and $M_2$, respectively. Let $E_1(a)$ be the perturbations of the $\mu_1(a)$ such that $\|E_1(a)\| \leq \epsilon/(2d_2)$ and the $\mu_1(a)+E_1(a)$ are simultaneously diagonalizable by some matrix $P_1$. Similarly for $M_2$. 

Then the matrices $(\mu_1(a) + E_1(a)) \shuffle (\mu_2(a) + E_2(a))$ (for all $a$) are simultaneously diagonalizable by $P_1 \otimes P_2$. 
To see why, let $A_1 = \mu_1(a) + E_1(a)$ and $A_2 = \mu_2(a)+E_2(a)$ and
observe that
\begin{equation*}
\begin{split}
&(P_1 \otimes P_2)(A_1 \shuffle A_2)(P_1 \otimes P_2)^{-1} \\ 
&\quad = (P_1 \otimes P_2)(A_1 \otimes I + I \otimes A_2)(P_1^{-1} \otimes P_2^{-1}) \\
&\quad = P_1 A_1 P_1^{-1} \otimes I + I \otimes P_2 A_2 P_2^{-1} \\
&\quad = P_1A_1P_1^{-1} \shuffle P_2A_2P_2^{-1},
\end{split}
\end{equation*}
which is diagonal.

To show that $(\mu_1(a) + E_1(a)) \shuffle (\mu_2(a) + E_2(a))$ is close to $(\mu_1(a) \shuffle \mu_2(a)$, observe that
\begin{equation*}
\begin{split}
&(\mu_1(a) + E_1(a)) \shuffle (\mu_2(a) + E_2(a)) \\
&\quad = (\mu_1(a) + E_1(a)) \otimes I + I \otimes (\mu_2(a) + E_2(a)) \\
&\quad = \mu_1(a) \otimes I + E_1(a) \otimes I + I \otimes \mu_2(a) + I \otimes E_2(a) \\
&\quad = (\mu_1(a) \shuffle \mu_2(a)) + (E_1(a) \shuffle E_2(a)).
\end{split}
\end{equation*}
Therefore,
\begin{equation*}
\begin{split}
&\| (\mu_1(a) + E_1(a)) \shuffle (\mu_2(a) + E_2(a)) - \mu_1(a) \shuffle \mu_2(a) \| \\
&\quad = \| E_1(a) \shuffle E_2(a) \| \\
&\quad = \| E_1(a) \otimes I + I \otimes E_2(a) \| \\
&\quad \leq \| E_1(a) \otimes I \| + \| I \otimes E_2(a) \| \\
&\quad \leq \|E_1(a)\| d_2 + d_1 \|E_2(a)\| \\
&\quad \leq \epsilon.
\end{split}
\end{equation*}

\section{Proof of Proposition~\ref{prop:makeasd}}
\label{sec:makeasd}
Because any set of commuting matrices can be simultaneously triangularized by a change of basis, assume without loss of generality that $M$'s transition matrices are upper triangular, that is, there are no transitions from state~$q$ to state $r$ where $q>r$.

Let $M = (Q, \Sigma, \lambda, \mu, \rho)$, and arbitrarily number the symbols of $\Sigma$ as $a_1, \ldots, a_m$.
Note that $M$ assigns the same weight to multiset $w$ as it does to the sorted symbols of $w$. That is, we can compute the weight of $w$ by summing over sequences of states $q_0, \ldots, q_m$ such that $q_0$ is an initial state, $q_m$ is a final state, and $M$ can get from state $q_{i-1}$ to $q_{i}$ while reading $a_i^k$, where $k$ is the number of occurrences of $a_i$ in $w$. 

For all $a \in \Sigma, q, r \in Q$, define $M_{q,a,r}$ to be the automaton that assigns to $a^k$ the same weight that $M$ would going from state $q$ to state $r$ while reading $a^k$. That is,
\begin{align*}
M_{q,a,r} &= (\lambda_{q,a,r}, \mu_{q,a,r}, \rho_{q,a,r}) \\
[\lambda_{q,a,r}]_q &= 1 \\
\mu_{q,a,r}(a) &= \mu(a) \\
[\rho_{q,a,r}]_r &= 1
\end{align*}
and all other weights are zero.

Then we can build a multiset automaton equivalent to $M$ by combining the $M_{q,a,r}$ using the union and shuffle operations:
\begin{equation*}
M' = \bigoplus_{\substack{q_0, \ldots, q_m \in Q\\q_0 \leq \cdots \leq q_m}}  \lambda_{q_0} M_{q_0,a_1,q_1} \shuffle \cdots \shuffle M_{q_{m-1},a_m,q_m} \rho_{q_m}
\end{equation*}
(where multiplying an automaton by a scalar means scaling its initial or final weight vector by that scalar). The $M_{q,a,r}$ are unary, so by Proposition~\ref{prop:regops}, the transition matrices of $M'$ are ASD.
Since $M_{q,a,r}$ has $r-q+1$ states, the number of states in $M'$ is 
\begin{equation*}
|Q'| = \sum_{q_0 \leq \cdots \leq q_m} \prod_{i=1}^{m} (q_i-q_{i-1}+1)
\end{equation*}
which we can find a closed-form expression for using generating functions. If $p(z)$ is a polynomial, let $[z^{i}](p(z))$ stand for ``the coefficient of $z^{i}$ in $p$.'' Then
\begin{align*}
|Q'| 
&= \left[z^{d-1}\right] \left(\sum_{i=0}^{\infty} z^i\right) \left(\sum_{i=0}^{\infty} (i+1)z^i\right)^m \left(\sum_{i=0}^{\infty} z^i\right) \\
&= \left[z^{d-1}\right] \left(\frac1{1-z}\right) \left(\frac1{1-z}\right)^{2m} \left(\frac1{1-z}\right)\\
&= \left[z^{d-1}\right] \left(\frac1{1-z}\right)^{2m+2} \\
&= \binom{2m+d}{d-1}.
\end{align*}

\end{document}